\newcommand*{\addFileDependency}[1]{
  \typeout{(#1)}
  \@addtofilelist{#1}
  \IfFileExists{#1}{}{\typeout{No file #1.}}
}
\newcommand*{\myexternaldocument}[1]{%
    \externaldocument{#1}%
    \addFileDependency{#1.tex}%
    \addFileDependency{#1.aux}%
}
\newcolumntype{P}[1]{>{\centering\arraybackslash}p{#1}}
\newcommand{\yc}[1]{{{\color{black} #1}}}
\DeclareMathOperator*{\argmin}{arg\,min}
\newtheorem{theorem}{Theorem}
\newtheorem{corollary}{Corollary}
\newtheorem{prop}{Proposition}
\newcommand\numberthis{\addtocounter{equation}{1}\tag{\theequation}}
\newcolumntype{C}[1]{>{\Centering}m{#1}}
\title{\textbf{DIF  Statistical Inference  without Knowing Anchoring Items }}
\author[1]{Yunxiao Chen\footnote{Correspondence concerning this article should be addressed to Yunxiao Chen, Department of Statistics, London School of Economics and Political Science, London, UK
  WC2A 2AE.  E-mail: y.chen186@lse.ac.uk}}
\author[2]{Chengcheng Li}
\author[2]{Jing Ouyang}
\author[2]{Gongjun Xu}
\affil[1]{London School of Economics and Political Science}
\affil[2]{University of Michigan}
\date{}  
\begin{document}
\pagenumbering{arabic}

\maketitle

\begin{abstract}
Establishing the invariance property of an instrument (e.g., a questionnaire or test) is a key step for establishing its measurement validity. Measurement invariance is typically assessed by differential item functioning (DIF) analysis, i.e., detecting DIF items whose response distribution depends  not only on the latent trait measured by the instrument but also on the group membership. DIF analysis is confounded by the group difference in the latent trait distributions. Many DIF analyses require knowing several anchor items that are DIF-free in order to draw inferences on whether each of the rest is a DIF item, where the anchor items are used to identify the latent trait distributions. When no prior information on anchor items is available, or some anchor items are misspecified, item purification methods and regularized estimation methods can be used. The former 
iteratively purifies the anchor set by a stepwise model selection procedure, and the latter selects the DIF-free items by a LASSO-type regularization approach. Unfortunately, unlike the methods based on a correctly specified anchor set, these methods are not guaranteed to provide valid statistical inference (e.g., confidence intervals and $p$-values). In this paper, we propose a 
new method for DIF analysis under a  multiple indicators and multiple causes (MIMIC) model for DIF. 
This method adopts a minimal $L_1$ norm condition for identifying the latent trait distributions.
Without requiring prior knowledge about an anchor set, it can accurately estimate the DIF effects of individual items and further draw valid statistical inferences for quantifying the uncertainty. Specifically, the inference results allow us to control the type-I error for DIF detection, which may not be possible with item purification and regularized estimation methods. 
We conduct simulation studies to evaluate the performance of the proposed method and compare it with the anchor-set-based likelihood ratio test approach and the LASSO approach. 
The proposed method is applied to analyzing the three personality scales of the Eysenck personality questionnaire - revised (EPQ-R).  \\

\noindent Keywords: {\it Differential Item Functioning; Measurement Invariance; Item Response Theory; Least Absolute Deviations; Confidence Interval. }
\end{abstract}

\newpage 
\doublespacing
\section{Introduction}

Measurement invariance refers to the psychometric equivalence of an instrument (e.g., a questionnaire or test) across several specified groups, such as gender and ethnicity. The lack of measurement invariance suggests that the instrument has different structures or meanings to different groups, leading to biases in measurements \citep{millsap2012statistical}. 

Measurement invariance is typically assessed by differential item functioning (DIF) analysis of item response data that aims to detect the measurement non-invariant items (i.e. DIF items). More precisely, a DIF item has a response distribution that depends on not only the latent trait measured by the instrument but also respondents' group membership. Therefore, the detection of a DIF item involves comparing the item responses of different groups, conditioning on the latent traits. The complexity of the problem lies in that individuals' latent trait levels cannot be directly observed but are measured by the instrument that may contain DIF items. In addition, different groups may have different latent trait distributions. This problem thus involves 
identifying the latent trait, and then conducting the group comparison given individuals' latent trait levels. 

Many statistical methods have been developed for DIF analysis. Traditional methods for DIF analysis require prior knowledge about a set of DIF-free items, which is known as the anchor set. This anchor set is used to identify the latent trait distribution. These methods can be classified into two categories. Methods in the first category \citep{mantel1959statistical, dorans1986demonstrating,swaminathan1990detecting, shealy1993model, zwick2000using, zwick2002application, may2006multilevel, soares2009integrated, frick2015rasch} do not explicitly assume an item response theory (IRT) model, and methods in the second category  \citep{thissen1988use, lord2012applications, kim1995detection, raju1988area, raju1990determining,  woods2013langer,  oort1998simulation, steenkamp1998assessing, cao2017monte, woods2013langer, tay2015overview, tay2016item} are developed based on IRT models. Compared with non-IRT-based methods, an IRT-based method defines the DIF problem more clearly, at the price of potential model misspecification. Specifically, an IRT model represents the latent trait as a latent variable and further characterizes the item-specific DIF effects by modelling each item response distribution as a function of the latent variable and group membership.

The DIF problem is well-characterized by a multiple indicators, multiple causes (MIMIC) IRT model, which is a structural equation model  originally developed for continuous indicators \citep{zellner1970estimation,goldberger1972structural} and later extended to categorical item response data \citep{muthen1985method,muthen1991instructionally, muthen1985multiple}. A MIMIC model 
for DIF consists of a measurement component and a structural component.  The measurement component models how the item responses depend on the measured psychological trait and respondents' group membership. The structural component models the group-specific distributions of the psychological trait. The anchor set imposes zero constraints on item-specific parameters in the measurement component, making the model, including the latent trait distribution, identifiable. Consequently, the DIF effects of the rest of the items can be tested by drawing statistical inferences on the corresponding   parameters  under the identified model.

Anchor-set-based methods rely heavily on a correctly specified set of DIF-free items. The misspecification of some anchor items can lead to invalid statistical inference results \yc{-- Type I errors increase and power decreases when anchor items are not completely DIF-free}
\citep{kopf2015framework}. To address this issue, 
item purification methods \citep{candell1988iterative, clauser1993effects, fidalgo2000effects, wang2003effects, wang2004effects, wang2009mimic, kopf2015framework, kopf2015anchor} have been proposed that iteratively select an anchor set by stepwise model selection methods.
\yc{Several recently developed tree-based DIF detection methods \citep{strobl2015rasch,tutz2016item,bollmann2018item}, which can detect DIF brought by continuous covariates,
may be viewed as item purification methods.  
However, with multiple items containing DIF, item purification may suffer from masking and swamping effects \citep{barnett1994outliers}.} 
More recently, regularized estimation methods \citep{magis2015detection, tutz2015penalty, huang2018penalized,  belzak2020improving, bauer2020simplifying, schauberger2020regularization}  have   been proposed that  use   LASSO-type regularized estimation procedures for simultaneous model selection and parameter estimation.  
 \yc{Moreover, 
  \cite{bechger2015statistical} and 
  \cite{yuan2021differential} proposed DIF detection methods based on the idea of differential item
pair functioning, which does not require prior information about anchor items.}
Unfortunately, unlike many anchor-set-based methods with a correctly specified anchor set, all these methods do not provide valid statistical inference for testing the null hypothesis of ``item $j$ is DIF-free'', for each item $j$. Consequently, the type-I error for testing the hypothesis cannot be guaranteed to be controlled at a pre-specified significance level. Furthermore, although the regularised estimation methods have been shown to accurately detect DIF items, they are typically computationally intensive, since they involve solving multiple regularized maximum likelihood estimation problems with different tuning parameters.


\yc{This paper proposes a new method that addresses the aforementioned issues with the existing methods.}
The proposed method can statistically accurately and computationally efficiently estimate  the DIF effects without requiring prior knowledge about anchor items.
It draws statistical inferences on the DIF effects of individual items, yielding valid confidence intervals and p-values. The point estimation and statistical inference lead to accurate detection of the DIF items, for which \yc{the item-level type-I error and further some test-level risk (e.g., false discovery rate)} can be controlled by the inference results. 
The method is proposed under a MIMIC model with a two-parameter logistic \citep{birnbaum1968some} IRT measurement model and a linear structural model. The key to this method is a minimal $L_1$ norm assumption for identifying the true model. \yc{As will be discussed later, this assumption holds when the proportion of non-DIF items is sufficiently large.} Methods are developed for estimating the model parameters and obtaining confidence intervals and $p$-values. Procedures for the detection of DIF items are further developed.  
Our method is compared to the likelihood ratio test method  \citep{thissen1993detection} 
that requires an anchor set, and a recently proposed LASSO-based approach \citep{belzak2020improving}. 

The rest of the paper is organised as follows. In Sections~\ref{sec:setup}, we introduce a MIMIC model framework for DIF analysis. Under this model framework, a new method is proposed for the statistical inference of DIF effects in Section~\ref{sec:prop}. Related works are discussed in Section \ref{sec:related}. 
Simulation studies and a real data application are given in Sections~\ref{sec:sim} and \ref{sec:real}, respectively. We conclude with discussions in Section~\ref{sec:dis}.  All the proofs for the proposition and theorems presented in the article, and the implementation details of the proposed algorithms can be found in the Supplementary Materials.

\section{A MIMIC Formulation of DIF}\label{sec:setup}

Consider $N$ individuals answering $J$ items. Let $Y_{ij} \in \{0, 1\}$ be a binary random variable, denoting individual $i$'s response to item $j$. Let $y_{ij}$ be the observed value, i.e., the realization, of $Y_{ij}$. For the ease of exposition, we use $\mathbf Y_i = (Y_{i1}, ..., Y_{iJ})$ to denote the response vector of individual $i$. The individuals are from two groups, indicated by $x_i = 0, 1$, where 0 and 1 are referred to as the reference and focal groups, respectively. We further introduce a latent variable $\theta_i$, which represents the latent trait that the items are designed to measure. DIF occurs when the distribution of $\mathbf Y_i$ depends on not only 
$\theta_i$ but also $x_i$. More precisely, DIF occurs if $\mathbf Y_i$ is not conditionally independent of $x_i$, given $\theta_i$. Seemingly a simple group comparison problem, DIF analysis is non-trivial due to the latency of $\theta_i$. In particular, the distribution of $\theta_i$ may depend on the value of $x_i$, which confounds the DIF analysis. In what follows, we describe a MIMIC model framework for DIF analysis, under which the relationship among $\mathbf Y_i$, $\theta_i$, and $x_i$ is characterized.  It is worth pointing out that this framework can be generalized to account for more complex DIF situations; \yc{see more details in Section~\ref{sec:related}.} 

\subsection{Measurement Model}
 
The two-parameter logistic (2PL) model \citep{birnbaum1968some}  is widely used to model binary item responses (e.g., wrong/right or absent/present). In the absence of DIF, the 2PL model assumes a logistic relationship between $Y_{ij}$ and $\theta_i$, which is independent of the value of $x_i$. That is, 
\begin{equation}\label{eq:irf}
P(Y_{ij} = 1\vert \theta_i = \theta) = \frac{\exp(a_j\theta + d_j)}{1+\exp(a_j\theta + d_j)},
\end{equation}
where the slope parameter $a_j$ and intercept parameter $d_j$ are typically known as the discrimination and easiness parameters, respectively. The right hand side of (\ref{eq:irf}) as a function of $\theta$ is known as the 2PL item response function. 
When the items potentially suffer from DIF, the item response functions may depend on the group membership $x_i$. In that case, the item response function can be modeled as 
\begin{equation}\label{eq:irf_dif}
P(Y_{ij} = 1\vert \theta_i = \theta, x_i) = \frac{\exp(a_j\theta + d_j + \gamma_j x_i)}{1+\exp(a_j\theta + d_j + \gamma_j x_i)},
\end{equation}
where $\gamma_j$ is an item-specific parameter that characterizes its DIF effect. More 
precisely, 
$$ \frac{P(Y_{ij} = 1\vert \theta_i = \theta, x_i = 1)/(1- P(Y_{ij} = 1\vert \theta_i = \theta, x_i = 1))}{P(Y_{ij} = 1\vert \theta_i = \theta, x_i=0)/(1-P(Y_{ij} = 1\vert \theta_i = \theta, x_i=0))} = \exp(\gamma_j).$$
That is, $\exp(\gamma_j)$ is the odds ratio for comparing two individuals from two groups who have the same latent trait level. Item $j$ is DIF-free under this model when $\gamma_j = 0$. We further make the local independence assumption as in most IRT models. That is, $Y_{i1}$, ..., $Y_{iJ}$ are assumed to be conditionally independent, given $\theta_i$ and $x_i$.

 
 
\subsection{Structural Model} \label{subsec:str}

A structural model specifies the distribution of $\theta_i$, which may depend on the group membership. Specifically, we assume the conditional distribution of $\theta_i$ given $x_i$ to follow a normal distribution, 
$$\theta_i \vert x_i \sim N(\beta x_i, 1_{\{x_i = 0\}} + \sigma^2 1_{\{x_i = 1\}}).$$ 
Note that the latent trait distribution for the reference group is set to a standard normal distribution to identify the location and scale of the latent trait. A similar assumption is typically adopted in IRT models for a single group of individuals. 

The MIMIC model for DIF combines the above measurement and structural models, for which a path diagram is given in Figure~\ref{fig:path}. The marginal likelihood function for this MIMIC model takes the form 
\begin{equation}\label{eq:likelihood}
L(\Xi) = \prod_{i=1}^N \int \left(\prod_{j=1}^J \frac{\exp(y_{ij}(a_j\theta + d_j + \gamma_j x_i))}{1+\exp(a_j\theta + d_j + \gamma_j x_i)}\right)\frac{1}{\sqrt{2\pi}} \exp\left(\frac{-(\theta - \beta x_i)^2}{2(1_{\{x_i = 0\}} + \sigma^2 1_{\{x_i = 1\}})}\right)  d\theta,
\end{equation}
where $\Xi = \{\beta, \sigma^2, a_j, d_j, \gamma_j, j = 1, ..., J\}$ denotes all the fixed model parameters. 

The goal of DIF analysis is to detect the DIF items, i.e., the items for which $\gamma_j \neq 0$. Unfortunately, without further assumptions, this problem is ill-posed due to the non-identifiability of the model. We discuss this identifiability issue below. 


\begin{figure}
    \centering
    \includegraphics[scale = 2]{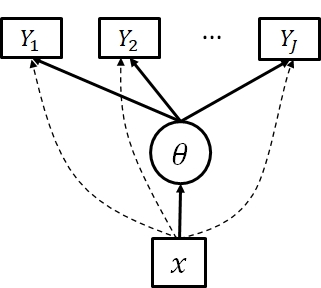}
    \caption{The path diagram of a MIMIC model for DIF analysis. The subscript $i$ is omitted for simplicity. The dashed lines from $x$ to $Y_j$ indicate the DIF effects. }
    \label{fig:path}
\end{figure}

\subsection{Model Identifiability}\label{subsec:identi}

Without further assumptions, the above MIMIC model is not identifiable. 
That is, for any constant $c$, the model remains equivalent, if we simultaneously replace $\beta$ and $\gamma_j$ by $\beta+c$ and $\gamma_j -a_jc$, respectively, and keep $a_j$ unchanged. This identifiability issue is due to that all the items are allowed to suffer from DIF, resulting in an unidentified latent trait. In other words, without further assumptions,  it is impossible to disentangle the DIF effects and the difference between the latent trait distributions of the two groups.   

\yc{According to Theorem 8.3 of \cite{san2015identification}, the location shift described above is the only source of indeterminacy for this MIMIC model when $J\geq 3$ and the sizes of both groups go to infinity. Let $\Xi^* = \{\beta^*, (\sigma^*)^2, a_j^*, d_j^*, \gamma_j^*, j = 1, ..., J\}$ be a set of parameters for the true model. Then a set of parameters yields the same data distribution as the true model if and only if there exists a constant $c$ such that $\Xi^*(c) = \{\beta^*+c, (\sigma^*)^2, a_j^*, d_j^*, \gamma_j^* - a_j^*c, j = 1, ..., J\}$. That is, $\{\Xi^*(c): c \in \mathbb R\}$ gives an equivalent class for the true model parameters. Knowing one or more anchor items means that the corresponding $\gamma_j^*$s are known to be zero, which fixes the location indeterminacy. However, if no anchor item is known,   we need to answer the   question:  
which member of this equivalent class should be used to define DIF effects?   We address it in Section~\ref{sec:prop} below.
}


\section{Proposed Method}\label{sec:prop}

\yc{In what follows, we address the model identifiability problem raised above and then 
propose a new method for DIF analysis that does not require prior knowledge about anchor items. 
As will be shown in the rest, the proposed method can not only accurately detect the DIF items, but  also provide valid statistical inference for testing the hypotheses of $\gamma_j = 0$, for any $j = 1, ..., J$.}


\subsection{\yc{Model Identifiability, Sparsity, and Minimal $L_1$ Condition}}\label{subsec:ml1}

\yc{We now address the model identifiability problem. The most natural idea is to choose $\Xi^*$ as the true parameter vector when the corresponding  $\boldsymbol\gamma^* = (\gamma_1^*, ..., \gamma_J^*)^\top$ is the sparsest in the equivalent class $\{\Xi^*(c): c \in \mathbb R\}$. In other words, we say  $\Xi^*$ is the true model parameter when 
$\Vert\boldsymbol\gamma ^*\Vert_0 < \Vert\boldsymbol\gamma ^*(c)\Vert_0$ for any $c$, where 
$\boldsymbol\gamma^*(c) = (\gamma_1^* - a_1^*c, ..., \gamma_J^* - a_J^*c)^\top$ and $\Vert\cdot\Vert_0$ denotes the $L_0$ norm, i.e., the number of non-zero entries in a vector. We note that this sparsity assumption is essential if one wants to formulate the DIF detection problem as a model selection problem.  
It is explicitly or implicitly made by most DIF detection methods that do not require anchor items, including item purification and regularised estimation methods.  

However, there are still ambiguities with 
the above sparsity assumption.  First,  how sparse should $\boldsymbol\gamma^*$ be? We note that the true parameters need to satisfy $\Vert\boldsymbol\gamma^*\Vert_0 \leq J-2$; that is, there are at least two zeros in the true $\boldsymbol\gamma^*$ vector. This is because, $\Vert\boldsymbol\gamma^*(c)\Vert_0 \leq J-1$, if $c = \gamma_j^*/a_j^*$ for any $j$. However, we may not want to identify the latent trait with only two items because, in that case, the two items are of high leverage -- the latent trait becomes unidentifiable when we remove one of these items. For the latent trait to be firmly identified, it may be sensible to assume that $\boldsymbol\gamma^*$ is sufficiently sparse, where the sparsity may be measured by  the proportion of non-zero coefficients in $\boldsymbol\gamma^*$. Further discussions will be provided in the sequel regarding the sparsity level. We note that this ``sufficiently sparse" assumption aligns well with the practical utility of DIF analysis in educational testing \citep[e.g.,][]{holland1993differential} as well as certain settings of psychological measurement \citep[e.g., Chapter 1,][]{millsap2012statistical} and health-related measurement \citep[e.g.,][]{scott2010differential}. For example, in educational testing, DIF analysis is conducted to ensure the fairness of a test form. In this application, the test operator aims to identify a small number of DIF items that cause a bias in the test result.
The identified items will be reviewed by domain experts, and then revised or removed from the item pool. For this process to be operationally feasible, one typically needs to assume that the majority of the items are DIF-free, i.e., $\boldsymbol\gamma^*$ is sufficiently sparse. 

Second, the $L_0$ norm is not easy to work with  from a statistical perspective. Due to the randomness in the data, likelihood-based estimation methods almost never give us a truly sparse solution. Consequently, one essentially needs to search over $O(2^J)$ all possible models to find the sparest model (e.g., using a suitable information criterion). Item purification and regularized estimation methods narrow the search by stepwise procedures and regularized estimation procedures, respectively. Even with these methods,   the computation can still be intensive, and consistent selection of the true model is not always guaranteed. 
}

\yc{Following the previous discussions, we now impose a condition for identifying the true model parameters, which is statistically easy to work with and suitable when  the true $\boldsymbol\gamma^*$ is sufficiently sparse. Specifically,}
we require the following minimal $L_1$ (ML1) condition to hold 
\begin{equation}\label{eq:ML1}
\sum_{j=1}^J \vert \gamma_j^* \vert < \sum_{j=1}^J \vert \gamma_j^*  -a_j^*c\vert,
\end{equation}
 for all $c \neq 0$. This assumption implies that, among all models that are equivalent to the true model, the true parameter vector $\boldsymbol{\gamma}^*$ has the smallest $L_1$ norm. 
 Equivalently, we can rewrite \eqref{eq:ML1} as 
 \begin{equation}\label{eq:ML2}
 \arg\min_c h(c) = 0,
 \end{equation}
 where  $h(c) = \sum_{j=1}^J \vert \gamma_j^*  -a_j^*c\vert$. We give an example of 
 $h(c)$ in Figure~\ref{fig:ML1}, 
where $h(c)$ is constructed with a sparse $\boldsymbol\gamma^*$. More specifically, we construct $h(c)$ with $J = 10$, $a_j^* = 1$ for all $j$, $\gamma_j^* = 0$ and 1 when $j = 1, ..., 8$ and  $j = 9, 10$, respectively. \yc{In this example, we note
that $h(c)$ has a unique minumum at $c=0$, i.e., 
\eqref{eq:ML1}, or equivalently, \eqref{eq:ML2} holds.}

 \yc{In what follows, we show that the ML1 condition holds when $\boldsymbol\gamma^*$ is sufficiently sparse.} The following proposition provides a sufficient and necessary condition for the ML1 condition \eqref{eq:ML1} (or equivalently \eqref{eq:ML2}) to hold.  The proof is given in the Supplementary Materials.

\begin{figure}
    \centering
    \includegraphics[scale = 0.6]{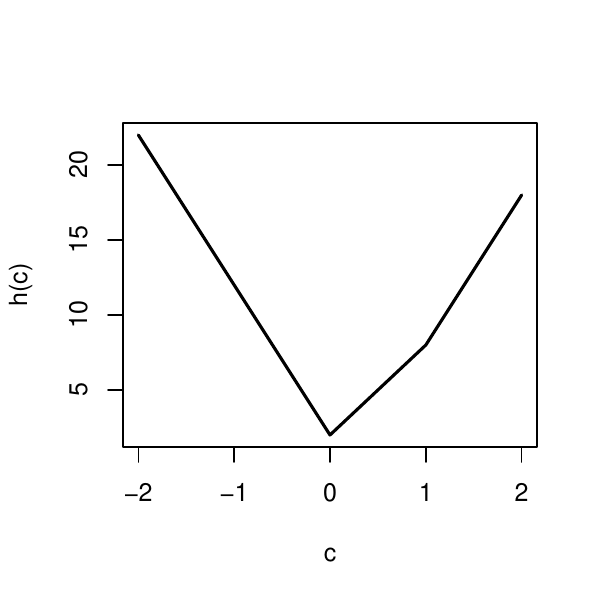}
    \caption{Function $h(c) = \sum_{j=1}^J \vert \gamma_j^*  -a_j^*c\vert$, where $J = 10$, $a_j^* = 1$ for all $j$, $\gamma_j^* = 0$ and 1 for $j = 1, ..., 8$ and  $j = 9, 10$, respectively. The minimal value of $h(c)$ is achieved when $c = 0$. }
    \label{fig:ML1}
\end{figure}
 
 \begin{prop}\label{prop:ML1} 
 \yc{Assume that $a_j^* \neq 0$ for all $j$.}
 Condition \eqref{eq:ML1} holds if and only if 
\begin{equation}\label{eq:1prop1}
 \sum_{j=1}^J |a_j^*| \left(- I\big(\frac{\gamma_j^*}{a_j^*} \geq 0\big) + I\big(\frac{\gamma_j^*}{a_j^*} < 0\big)\right) < 0
\end{equation}
 and 
  \begin{equation}\label{eq:2prop1}
  \sum_{j=1}^J |a_j^*| \left(- I\big(\frac{\gamma_j^*}{a_j^*} > 0\big) + I\big(\frac{\gamma_j^*}{a_j^*} \leq 0\big)\right) > 0,
  \end{equation}
  where $I(\cdot)$ is the indicator function. 
 \end{prop}
 
We note that inequalities  \eqref{eq:1prop1} and \eqref{eq:2prop1}
hold for the example in Figure~\ref{fig:ML1}, where $\sum_{j=1}^J I(\gamma_j^* \geq 0) = 10$, $\sum_{j=1}^J  I(\gamma_j^* < 0) =0$, $\sum_{j=1}^J I(\gamma_j^* \leq 0) = 8$ and  $\sum_{j=1}^J  I(\gamma_j^* > 0) = 2$. \yc{To elaborate on the results of Proposition~\ref{prop:ML1}, we first consider a special case when $a_j^* = 1$ for all $j$, i.e., the measurement model is a one-parameter logistic model when there is no DIF.} Then according to Proposition~\ref{prop:ML1}, the ML1 condition holds if and only if 
$\sum_{j=1}^J I(\gamma_j^* \geq 0) > \sum_{j=1}^J  I(\gamma_j^* < 0) $
and 
$\sum_{j=1}^J I(\gamma_j^* \leq 0) > \sum_{j=1}^J  I(\gamma_j^* > 0).$ \yc{Suppose that more than half of the items are DIF-free, i.e., $\sum_{j=1}^J I(\gamma_j^* = 0) > J/2$. Then the ML1 condition holds, because 
$\sum_{j=1}^J I(\gamma_j^* \geq 0) > J/2 > \sum_{j=1}^J  I(\gamma_j^* < 0) $
and 
$\sum_{j=1}^J I(\gamma_j^* \leq 0) > J/2 > \sum_{j=1}^J  I(\gamma_j^* > 0).$  
More generally, let $\gamma_{(1)}^* \leq \gamma_{(2)}^* \leq \cdots \leq  \gamma_{(J)}^*$ be the order statistics of $\gamma_1^*$, ..., $\gamma_J^*$. 
The ML1 condition holds when $\gamma_{((J+1)/2)}^* = 0$ if $J$
is an odd number, and when $\gamma_{(J/2)}^* = \gamma_{((J/2)+1)}^* = 0$ if $J$ is an even number. That is, the ML1 condition holds when we have similar numbers of positive and negative DIF items and a few non-DIF items, in which case  the ML1 condition can hold even if $\sum_{j=1}^J I(\gamma_j^* = 0) \leq J/2$. However, if all the DIF items are of the same direction (all positive or all negative), then it is easy to show that the ML1 condition does not hold if $\sum_{j=1}^J I(\gamma_j^* = 0) \leq J/2$.

We then extend the above discussion to the general setting where the discrimination parameters vary across items. Based on Proposition~\ref{prop:ML1}, we provide a sufficient condition for the ML1 condition, which suggests that the ML1 condition holds when $\gamma_j^* = 0$ for a sufficient number of items.  
}
 
\begin{corollary}\label{coro:1}
\yc{Assume that $a_j^* \neq 0$ for all $j$. Let $\rho^* = \max_j\{\vert a_j^*\vert\}/\min_j\{\vert a_j^*\vert\}$. Then 
Condition \eqref{eq:ML1} holds if 
\begin{equation}\label{eq:suff1}
\sum_{j=1}^J I(\gamma_j^*/a_j^* \leq  0) > \rho^* \sum_{j=1}^J I(\gamma_j^*/a_j^* >  0)
\end{equation}
and 
\begin{equation}\label{eq:suff2}
\sum_{j=1}^J I(\gamma_j^*/a_j^* <  0) >  \rho^* \sum_{j=1}^J I(\gamma_j^*/a_j^* \geq  0).
\end{equation}
}

\end{corollary}

\yc{We note \eqref{eq:suff1} and \eqref{eq:suff2} are not a necessary condition, meaning that the ML1 condition can still hold even if \eqref{eq:suff1} and \eqref{eq:suff2} are not satisfied. Here, $\rho^*$ quantifies the variation of the absolute discrimination parameters, where a larger value of $\rho^*$ indicates a higher variation. Corollary~\ref{coro:1} suggests that ML1 condition holds if $\sum_{j=1}^J I(\gamma_j^* = 0) > (\rho^*/(1+\rho^*))J$. For instance, when $\rho^* = 2$, then the ML1 condition is guaranteed  if $\sum_{j=1}^J I(\gamma_j^* = 0) >  (2/3)J$, i.e., at least two-thirds of the items are DIF-free. This sparsity requirement can be relaxed if the sizes of items with 
$\gamma_j^*/a_j^* >  0$ and those with $\gamma_j^*/a_j^* < 0$
are balanced. 
}

\subsection{Parameter Estimation}

\yc{Suppose that the true model parameters satisfy the ML1 condition. Then these parameters can be estimated by finding the ML1 estimate $\hat \Xi = \{\hat \beta, \hat \sigma^2, \hat \gamma_j, \hat d_j, \hat a_j, j=1,...,J\} $ satisfying 
\begin{equation}
\log L(\hat \Xi) = \max_{\Xi} \log L(\Xi)
\end{equation}
and for any $\tilde \Xi = \{\tilde \beta, \tilde \sigma^2, \tilde \gamma_j, \tilde d_j, \tilde a_j, j=1,...,J\} $ satisfying $\log L(\tilde \Xi) = \max_{\Xi} \log L(\Xi)$, 
\begin{equation}
\sum_{j=1}^J \vert \hat \gamma_j \vert \leq  \sum_{j=1}^J \vert \tilde \gamma_j\vert.
\end{equation}
That is, $\hat \Xi$ is a maximum likelihood estimate whose DIF parameter vector has the smallest $L_1$ norm.  The estimate $\hat \Xi$ can be easily computed using a two-step procedure as described in Algorithm~\ref{alg:ML1}. }


\begin{algorithm}[h!]
\SetAlgoLined
\begin{itemize}
\item[] {\bf Step 1:} Solve the following MML estimation problem 

\begin{align}
\tilde \Xi &= \arg\max_{\Xi} \log L(\Xi),
\quad s.t. \quad\gamma_1 = 0. \label{eq:mml}
\end{align}

\item[] {\bf Step 2:} Solve the optimization problem 
\begin{equation}\label{eq:lad}
\hat c = \arg\min_{c} \sum_{j=1}^J \vert \tilde\gamma_j  - \tilde a_jc\vert
\end{equation}
\item[] {\bf Output:} The ML1 estimate  
$\hat \gamma_j = \tilde\gamma_j  - \tilde a_j\hat c,$
$\hat\beta = \tilde\beta + \hat c,$   $\hat \alpha_j = \tilde \alpha_j$, $\hat d_j =\tilde d_j , {\hat{\sigma}^2 = \tilde{\sigma}^2}$.
\end{itemize}
\caption{}
\label{alg:ML1}
\end{algorithm}

We provide some remarks about these steps.
The estimator (\ref{eq:mml}) in Step 1 can be viewed as the MML estimator of the MIMIC model, treating item 1 as an anchor item. 
We emphasize that the constraint $\gamma_1 = 0$ in Step 1 is an arbitrary but mathematically convenient constraint for ensuring the estimability of the MIMIC model when solving  (\ref{eq:mml}). 
It does not require item 1 to be truly a DIF-free item. This constraint can be replaced by any equivalent constraint, for example, $\gamma_2 = 0$, while not affecting the final estimation result.
Step 2 finds the transformation that leads to the ML1 solution among all the models equivalent to the estimated model from Step 1.  The optimization problem (\ref{eq:lad}) is convex that takes the same form as the Least Absolute Deviations (LAD) objective function in median regression  \citep{koenker_2005}. {Specifically, the LAD function is a statistical optimization function measuring the sum of absolute residuals. Given a set of data $(x_i, y_i)$ for $i = 1,\dots, n$, the LAD  function is defined as $S(f) = \sum_{i=1}^{n} \vert y_i - f(x_i) \vert$ and we seek to find $f$ that minimizes LAD function $S$. Our problem (\ref{eq:lad}) is convex since we are minimizing a convex LAD function over a set of real numbers, which gives us a unique global optimum.}  Consequently, it can be solved using standard statistical packages/software for quantile regression. The R package ``\textit{quantreg}''  \citep{quantreg} is used in our simulation study and real data analysis. 

The ML1 condition~(\ref{eq:ML1}), together with some additional regularity conditions, guarantees the consistency of the above ML1 estimator. That is, $\hat \Xi$ will converge to $\Xi^*$ as the sample size $N$ grows to infinity. This result is formalized in Theorem~\ref{thm:consistency}, with its proof given in the Supplementary Materials.

\begin{theorem}\label{thm:consistency}
Let  $\Xi^*=\{\beta^*, (\sigma^*)^2, \gamma_j^*, d_j^*, a_j^*, j=1,..., J\}$ be the true model parameters, and $\Xi^\dagger=\{\beta^\dagger, (\sigma^2)^\dagger, \gamma_j^\dagger, d_j^\dagger, a_j^\dagger, j=1,...,J\}$ be the true parameter values of the equivalent MIMIC model with constraint $\gamma_1^\dagger=0$. Assume this equivalent model satisfies the standard regularity conditions in Theorem 5.14 of \cite{van2000asymptotic} that concerns the consistency of maximum likelihood estimation. Further, assume that the ML1 condition~(\ref{eq:ML1}) holds. 
Then $\vert \hat \beta - \beta^*\vert=o_P(1), {\vert  \hat{\sigma}^2 -(\sigma^2)^*\vert =o_P(1)}, \vert \hat \gamma_j - \gamma_j^*\vert =o_P(1)$, $\vert \hat a_j - a_j^*\vert =o_P(1), \mbox{and~} \vert \hat d_j - d_j^*\vert =o_P(1)$ as $N\to \infty$.
\end{theorem}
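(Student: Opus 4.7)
The plan is to decompose the proof along the two steps of Algorithm~\ref{alg:ML1}. First I would establish that the Step~1 constrained MML estimator $\tilde \Xi$ is consistent for the reparametrized truth $\Xi^\dagger$, which is the true model viewed under the arbitrary identification constraint $\gamma_1=0$. Let $c^*:=\gamma_1^*/a_1^*$ so that $\gamma_j^\dagger=\gamma_j^*-a_j^*c^*$, $\beta^\dagger=\beta^*+c^*$, $a_j^\dagger=a_j^*$, $d_j^\dagger=d_j^*$, and note that $\gamma_1^\dagger=0$ by construction. Then I would argue that the Step~2 LAD estimator $\hat c$ converges in probability to $-c^*$, so that the final transformation $\hat\gamma_j=\tilde\gamma_j-\tilde a_j\hat c$, $\hat\beta=\tilde\beta+\hat c$ exactly undoes the spurious identification shift, delivering consistency for $\Xi^*$.

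For Step~1, the restricted model $\{\gamma_1=0\}$ is identifiable (its latent-trait location is pinned down by setting one DIF coefficient to zero), so the marginal log-likelihood is maximized uniquely at $\Xi^\dagger$ in the population. Combined with the standing regularity assumptions imported from Theorem~5.14 of \cite{van2000asymptotic} (compactness, continuity, and a suitable envelope for the log-likelihood of the MIMIC model), the classical M-estimator consistency argument gives $\tilde\beta\to\beta^\dagger$, $\tilde\gamma_j\to\gamma_j^\dagger$, $\tilde a_j\to a_j^\dagger$, $\tilde d_j\to d_j^\dagger$ in probability. This step is essentially a direct invocation of cited machinery and is not where the novelty lies.

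The heart of the proof is Step~2, and this is the step I expect to be the main obstacle. Define the population LAD objective $h^\dagger(c):=\sum_{j=1}^J |\gamma_j^\dagger-a_j^\dagger c|$ and the sample objective $\hat h(c):=\sum_{j=1}^J |\tilde\gamma_j-\tilde a_j c|$. The key observation is that, after reindexing by $c'=c+c^*$, $h^\dagger(c)=\sum_j |\gamma_j^*-a_j^*(c+c^*)|$, so the ML1 condition~(\ref{eq:ML1}) translates into the statement that $h^\dagger$ has a strict and unique global minimum at $c=-c^*$. Since $\hat h$ is convex and continuous, and converges pointwise in probability to $h^\dagger$ by the Step~1 consistency, one gets uniform convergence on every compact set for free (pointwise convergence of convex functions implies uniform convergence on compact subsets of the interior of the domain). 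An argmin continuous-mapping argument, combined with a coercivity check to trap $\hat c$ in a compact set with probability tending to one (which again follows from the fact that $h^\dagger(c)\to\infty$ as $|c|\to\infty$, using that not all $a_j^*$ are zero), then yields $\hat c \xrightarrow{P} -c^*$. The delicate points here are verifying the strict uniqueness of the minimum of $h^\dagger$ from the strict inequality in~(\ref{eq:ML1}), and justifying the coercivity-plus-convexity route to argmin consistency rather than assuming a compact parameter space for $c$.

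Finally, I would assemble the pieces by the continuous mapping theorem and Slutsky: $\hat\gamma_j=\tilde\gamma_j-\tilde a_j\hat c \xrightarrow{P} \gamma_j^\dagger-a_j^\dagger(-c^*)=\gamma_j^*-a_j^*c^*+a_j^*c^*=\gamma_j^*$, $\hat\beta=\tilde\beta+\hat c\xrightarrow{P}\beta^\dagger-c^*=\beta^*$, while $\hat a_j=\tilde a_j\xrightarrow{P}a_j^*$ and $\hat d_j=\tilde d_j\xrightarrow{P}d_j^*$ are immediate from Step~1. This gives the claimed $o_P(1)$ bounds. The whole argument is essentially a two-stage argmin-consistency proof, with the ML1 condition playing the role of a population-level identification assumption that makes the second-stage argmin unique.
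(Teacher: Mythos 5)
Your proposal is correct and follows essentially the same route as the paper's own proof: consistency of the constrained MLE $\tilde\Xi$ for $\Xi^\dagger$ from classical theory, then argmin consistency of the convex LAD step via pointwise-to-uniform convergence of convex functions on compacts, with the ML1 condition supplying uniqueness of the population minimizer, and Slutsky to assemble the pieces. If anything, your explicit identification of $c^*=\gamma_1^*/a_1^*$ and the coercivity argument trapping $\hat c$ in a compact set are slightly more careful than the paper's treatment, but the substance is identical.
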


\yc{With a consistent point estimator, one can consistently 
select the true model, i.e., identifying the zeros and non-zeros in $\boldsymbol\gamma^*$, using a hard-thresholding procedure~\citep[see e.g.][]{meinshausen2009lasso}. As our focus is on the statistical inference of DIF parameters, we skip the details of the hard-thresholding procedure here.}

\subsection{Statistical Inference}
The statistical inference of individual $\gamma_j$ parameters  is of particular interest in the DIF analysis. With the proposed estimator, we can draw valid statistical inference on  the DIF parameters $\gamma_j$. 

Note that the uncertainty of $\hat \gamma_j$ is inherited from $\tilde \Xi$, where 
 $\sqrt{N}(\tilde \Xi -\Xi^\dagger)$  asymptotically follows a mean-zero multivariate normal distribution\footnote{Note that this is a degenerated multivariate normal distribution since $\tilde \gamma_1 = \gamma_1^\dagger=0$.}
 by the large-sample theory for maximum likelihood estimation; see Supplementary Materials for more details.
 We denote this multivariate normal distribution by $N(\mathbf 0, \Sigma^*)$, where a consistent estimator of $\Sigma^*$, denoted by $\hat \Sigma_N$, can be obtained based on the marginal likelihood.
 We define a function 
 $$G_j(\Xi) = \gamma_j - a_j \times \arg\min_{c} \sum_{l=1}^J \vert \gamma_l  -  a_lc\vert,$$
 where $\Xi = \{\beta, \sigma^2, a_l, d_l, \gamma_l, l = 1, ..., J\}$. Note that the function $G_j$ maps an arbitrary parameter vector of the MIMC model to  the $\gamma_j$ parameter of the equivalent ML1 parameter vector.
To draw statistical inference, we need the distribution of $$\hat \gamma_j - \gamma_j^* = G_j(\tilde \Xi) - G_j(\Xi^\dagger).$$
By the asymptotic distribution of $\sqrt{N}(\tilde \Xi -\Xi^\dagger)$, we know that the distribution of $G_j(\tilde \Xi) - G_j(\Xi^\dagger)$ can be approximated by that of $G_j(\Xi^\dagger + \mathbf Z/\sqrt{N}) - G_j(\Xi^\dagger)$, and the latter can be further approximated by 
$G_j(\tilde \Xi + \mathbf Z/\sqrt{N}) - G_j(\tilde\Xi)$, where $\mathbf Z$ follows a normal distribution 
$N(\mathbf 0, \hat \Sigma_N)$. Therefore, although function $G_j$ does not have an analytic form, we can approximate the distribution of $\hat \gamma_j - \gamma_j^*$ by Monte Carlo simulation. We summarize this procedure in Algorithm~\ref{alg:bootstrap} below. 

\begin{algorithm}[h!]
\begin{itemize}
\item[] {\bf Input:} The number of Monte Carlo samples $M$ and significance level $\alpha$. 
\item[] {\bf Step 1:} Generate $M$ i.i.d. samples from a multivariate normal distribution with mean $\mathbf 0$ and covariance matrix $\hat \Sigma_N$. We denote these samples as $\mathbf Z_1$, ..., $\mathbf Z_M$. 

\item[] {\bf Step 2:} Obtain $e_m = G_j(\tilde \Xi + \mathbf Z_m/\sqrt{N}) - G_j(\tilde\Xi)$, for $m=1, ..., M$.

\item[] {\bf Step 3:} Obtain the $\alpha/2$ and $1-\alpha/2$ quantiles of the empirical distribution of $(e_1, ..., e_M)$, denoted by $q_{\alpha/2}$ and $q_{1-\alpha/2}$, respectively. 

\item[] {\bf Output:} Level $1-\alpha$ confidence interval for $\gamma_j^*$ is given by $(\hat \gamma_j - q_{1-\alpha/2}, \hat \gamma_j - q_{\alpha/2})$. In addition, the $p$-value for a two-sided test of $\gamma_j^* = 0$ is given by 
$$\frac{\sum_{i=1}^M 1_{\{|e_i| > |\hat \gamma_j|\}} }{M}.$$
\end{itemize}

\caption{}
\label{alg:bootstrap}
\end{algorithm}

Algorithm~\ref{alg:bootstrap} only involves sampling from a multivariate normal distribution and solving a convex optimization problem based on the LAD objective function, both of which are computationally efficient. The value of $M$ is set to 10,000 in our simulation study and 50,000 in the real data example below.

The p-values can be used to control the type-I error rate, i.e.,  
the probability of mistakenly detecting a non-DIF item as a DIF one. To control the item-specific type-I errors to be below a pre-specified threshold $\alpha$ (e.g., $\alpha = 0.05$), we detect the items for which the corresponding p-values   are below $\alpha$. 
Besides the type-I error, we may also consider the False Discovery Rate (FDR) for DIF detection \citep{bauer2020simplifying}
to account for multiple comparisons, where the FDR is defined as the expected ratio of the number of non-DIF items to the total number of detections.  To control the FDR, the Benjamini-Hochberg (B-H) procedure \citep{benjamini1995controlling} can be employed to the p-values. \yc{Other compound risks may also be considered, such as the familywise error rate.}

\section{Related Works \yc{and Extensions}}\label{sec:related} 

\subsection{\yc{Related Works}}

Many of the IRT-based DIF analyses  \citep{thissen1986beyond,thissen1988use, thissen1993detection} 
require prior knowledge about a subset of DIF-free items, which are known as the anchor items. More precisely, we denote this known subset by $A$. Under the MIMIC model described above, it implies that the constraints $\gamma_j = 0$ are imposed for all $j \in A$ in the estimation.
With these zero constraints, the $\gamma_j$ parameters cannot be freely transformed, and thus, the above MIMIC model becomes identifiable. Therefore, for each non-anchor item $j \notin A$, the 
hypothesis of $\gamma_j = 0$ can be tested, for example, by a likelihood ratio test. 
The DIF items can then be detected based on the statistical inference of these hypothesis tests.

The validity of the anchor-item-based analyses relies on the assumption that the anchor items are truly DIF-free. If the anchor set includes one or more DIF items, then the results can be misleading \citep{kopf2015framework}. To address the issue brought by the mis-specification of the anchor set, item purification methods \citep{candell1988iterative, clauser1993effects, fidalgo2000effects, wang2003effects, wang2004effects, wang2009mimic, kopf2015framework, kopf2015anchor} have been proposed that iteratively purify the anchor set. These methods conduct model selection using a stepwise procedure to select the anchor set, implicitly assuming that there exists a reasonably large set of DIF items. Then DIF is assessed by hypothesis testing given the selected anchor set.  
This type of methods also has several limitations. First, 
the model selection results may be sensitive to the choice of the initial set of anchor items and the specific stepwise procedure (e.g., forward or backward selection), which is a common issue with stepwise model selection procedures (e.g., stepwise variable selection for linear regression). Second, the model selection step has uncertainty. As a result, there is no guarantee that the selected anchor set is completely DIF-free, and furthermore, the post-selection statistical inference of items may not be valid in the sense that the type-I error may not be controlled at the targeted significance level.

\yc{\cite{bechger2015statistical} and \cite{yuan2021differential} proposed DIF detection methods based on the idea of differential item
pair functioning. They considered a one-parameter logistic model setting, which corresponds to the case when $a_1 = \cdots = a_J$  in the current MIMIC model. Their idea is that the difference $\gamma_j - \gamma_{j'}$ is identifiable for any $j\neq j'$, though each individual $\gamma_j$ is not identifiable due to location indeterminacy. \cite{bechger2015statistical} focused on testing $\gamma_j - \gamma_{j'} = 0$ for all item pairs, and 
\cite{yuan2021differential} proposed data visualization methods and a Monte Carlo test to identify individual DIF items. However, they did not provide statistical inferences for the DIF effects of individual items. 

 }

Regularized estimation methods \citep{magis2015detection, tutz2015penalty, huang2018penalized,  belzak2020improving, bauer2020simplifying, schauberger2020regularization}  have also been proposed for identifying the anchor items, which also implicitly assumes that many items are DIF-free. These methods 
do not require prior knowledge about anchor items, and simultaneously select the DIF-free items and estimate the model parameters using a LASSO-type  penalty \citep{tibshirani1996regression}. Under the above MIMIC model, a regularized estimation procedure  solves the following optimization problem, 
\begin{equation}\label{eq:LASSO}
\hat \Xi^\lambda = \arg\min_{\Xi} - \log L(\Xi) + \lambda \sum_{j=1}^J \vert \gamma_j\vert,
\end{equation}
where $\lambda > 0$ is a tuning parameter that determines the sparsity level of the estimated $\gamma_j$ parameters. Generally speaking, a larger  value of $\lambda$ leads to a more sparse vector $\hat{\boldsymbol{\gamma}}^\lambda = ({\hat \gamma}_1^\lambda, ..., \hat \gamma_J^\lambda).$ A regularization method \citep[e.g.][]{belzak2020improving} solves the optimization problem (\ref{eq:LASSO}) for a sequence of $\lambda$ values, and then selects the tuning parameter $\lambda$ based on the BIC. Let $\hat \lambda$ be the selected tuning parameter. Items for which ${\hat \gamma}_j^{\hat\lambda} \neq 0$ are  
classified as DIF items and the rest are classified as  DIF-free items. While the regularization methods are computationally more stable than  stepwise model selection in the item purification methods, they also suffer from some limitations. First, they involve solving non-smooth optimization problems like (\ref{eq:LASSO}) for different tuning parameter values, which is not only computationally intensive but also requires tailored computation code that is not available in most statistical packages/software for DIF analysis. Second, these methods may be sensitive to the choice of the tuning parameter. Although methods and theories have been developed in the statistics literature to guide the selection of the tuning parameter, there is no consensus on how the tuning parameter should be chosen, leaving ambiguity in the application of these methods. 
Third, from the theoretical perspective, it is not clear whether these methods can guarantee model selection consistency. In particular, the model selection consistency of the LASSO procedure almost always requires a strong assumption called the 
irrepresentable condition \citep{zhao2006model,van2009conditions}. 
It is not clear when this assumption holds for the current problem.
On the other hand, the proposed ML1 condition is much easier to understand and check, as discussed in Section~\ref{subsec:ml1}. 
Finally, as a common issue of regularized estimation methods, obtaining valid statistical inference from these methods is not straightforward. 
That is, regularized estimation like (\ref{eq:LASSO}) does not provide a valid $p$-value for testing the null hypothesis $\gamma_j = 0$. In fact, post-selection inference after regularized estimation was conducted in \cite{bauer2020simplifying}, where the type I error cannot be controlled at the targeted level under some simulation scenarios.

We notice that there is a connection between the proposed estimator and the regularized estimator (\ref{eq:LASSO}). Note that $\hat \Xi$ is the one with the smallest $\sum_{j=1}^J \vert \gamma_j\vert$ among all equivalent estimators that maximize the likelihood function~(\ref{eq:likelihood}). When the solution path of (\ref{eq:LASSO}) is smooth and the solution to the ML1 problem (\ref{eq:lad}) is unique,  it is easy to see that $\hat \Xi$ is the limit of $\hat \Xi^{\lambda}$ when the positive tuning parameter $\lambda$ converges to zero. In other words, the proposed estimator can be viewed as a limiting version of the LASSO estimator (\ref{eq:LASSO}). According to Theorem~\ref{thm:consistency}, this limiting version of the LASSO estimator is sufficient for the consistent estimation of the true model under the ML1 condition.

\yc{We clarify that the proposed method may not always outperform other methods in terms of accuracy in classifying items, such as the LASSO procedure. From the simulation results in Section~\ref{sec:sim} below, we see that the proposed method and the LASSO procedure have similar accuracy in item classification when the DIF parameters are large. The key advantage of the proposed method  is that the proposed one provides valid statistical inference (e.g., p-values) when anchor items are not available. The inference results allow us to tackle the uncertainty in the decisions of DIF detection, which can be useful in many applications of DIF analysis where high-stake decisions need to be made.  }


\subsection{\yc{Extensions}}

\yc{
While we focus on the two-group setting and uniform-DIF (i.e., only the intercepts depend on the groups) in the previous discussion, the proposed framework is very general that can be easily generalised to other settings. In what follows, we discuss the ML1 condition under different settings. The proposed methods for point estimation and statistical inference can be extended accordingly.

\paragraph{Non-uniform DIF.} Under the 2PL measurement model, non-uniform DIF happens when the discrimination parameter also differs across groups. To model non-uniform DIF, we extend the current measurement model \eqref{eq:irf_dif} to 
\begin{equation}\label{eq:irf_dif2}
P(Y_{ij} = 1\vert \theta_i = \theta, x_i) = \frac{\exp(a_j\exp(\zeta_j x_i)\theta + d_j + \gamma_j x_i)}{1+\exp(a_j\exp(\zeta_j x_i)\theta + d_j + \gamma_j x_i)},
\end{equation}
while keeping the structural model the same as in Section~\ref{subsec:str}. 
This extended model has both location and scale indeterminacies.  Let $\Xi^* = \{\beta^*, (\sigma^*)^2, a_j^*, d_j^*, \zeta_j^*, \gamma_j^*, j = 1, ..., J\}$ be a set of parameters for the true model. Then a set of parameters yields the same data distribution as the true model if there exist constants $m$ and $c$ such that $\Xi^*(m,c) = \{(\beta^*-c)\times\exp(-m), \exp(-2m)\times(\sigma^*)^2, a_j^*, d_j^*, \zeta_j^* + m, \gamma_j^* - ca_j^*\exp(\zeta_j^*), j = 1, ..., J\}$. Note that an item $j$ is DIF-free if $\zeta_j = \gamma_j =0$. Under the same spirit as the ML1 condition~\eqref{eq:ML1}, we may assume the true model parameters $\Xi^*$ to satisfy 
$$\sum_{j=1}^J \vert a_j^* \vert < \sum_{j=1}^J \vert a_j^* +m \vert, \mbox{~and~} \sum_{j=1}^J\vert\gamma_j^*\vert < \sum_{j=1}^J\vert\gamma_j^* + ca_j^*\exp(\zeta_j^*)\vert$$ when $m\neq 0$ and $c\neq 0$.
These conditions tend to be satisfied when the proportion of DIF-free items is sufficiently large.

\paragraph{Multi-group setting.} There may be more than two groups in some DIF applications. Suppose that there are $K+1$ groups -- one reference group and $K$ focal groups. Let $x_i \in \{0, ..., K\}$
indicate the group membership. 

For simplicity, we focus on the 
uniform DIF setting. Then the measurement model becomes 
\begin{equation}\label{eq:irf_dif3}
P(Y_{ij} = 1\vert \theta_i = \theta, x_i =k) = \frac{\exp(a_j \theta + d_j + \gamma_{jk} )}{1+\exp(a_j \theta + d_j + \gamma_{jk})}, k =  1,..., K,
\end{equation}
and 
\begin{equation}\label{eq:irf_dif4}
P(Y_{ij} = 1\vert \theta_i = \theta, x_i =0) = \frac{\exp(a_j \theta + d_j   )}{1+\exp(a_j \theta + d_j ) }.
\end{equation}
The structural model becomes 
$\theta_i \vert x_i=k \sim N(\beta_k,   \sigma_k^2), k = 1, ..., K,$
and $\theta_i \vert x_i=0 \sim N(0, 1)$. Under this model, 
an item $j$ is DIF-free if $\gamma_{jk} = 0$ for all $k$. The location indeterminacy under this model leads to the following ML1 condition for identifying the true model parameters $\Xi^* = \{\beta^*_k, (\sigma_k^*)^2, a_j^*, d_j^*,   \gamma_{jk}^*, k = 1, ..., K, j = 1, ..., J,\}$:
$$\sum_{j=1}^{J}  \vert \gamma_{jk}^* \vert < \sum_{j=1}^{J}  \vert \gamma_{jk}^* -a_j^* c_k\vert,$$
for $c_k\neq 0$, $k = 1, ..., K$. 

We note that this ML1 condition for the multi-group setting allows the majority of the items to be DIF items as long as the vector 
$(\gamma_{1k}^*, ..., 
\gamma_{Jk}^*)^\top$ is sufficiently sparse for each focal group. Similar to the discussion in Section \ref{subsec:ml1}, in the special case of the one-parameter logistic model, the ML1 condition is guaranteed to hold if $\sum_{j=1}^J I(\gamma_{jk}^* = 0) > J/2$, for all $k$. Note that the set of items satisfying $\gamma_{jk}^* = 0$ can vary across focal groups.

\paragraph{Continuous covariates.} In some applications, DIF might be caused by continuous covariates, such as age. Suppose that we have $K$ continuous covariates $\mathbf x_i =  (x_{i1}, ..., x_{iK})^\top$, rather than discrete groups. Then we may consider the following measurement model

\begin{equation}\label{eq:irf_dif5}
P(Y_{ij} = 1\vert \theta_i = \theta, \mathbf x_i) = \frac{\exp(a_j \theta + d_j +\boldsymbol\gamma_j^\top \mathbf x_{i} )}{1+\exp(a_j \theta + d_j +\boldsymbol\gamma_j^\top \mathbf x_{i})},
\end{equation}
where $\boldsymbol\gamma_j = (\gamma_{j1}, ..., \gamma_{jK})^\top$ be the corresponding DIF parameters. We may assume the structural model takes a homoscedastic latent regression form 
$\theta\vert \mathbf x_i \sim N(\boldsymbol\beta \mathbf x_i,1)$, where the variance is fixed to 1 to avoid scale indeterminacy\footnote{We note that the
homoscedastic assumption is commonly adopted in structural equation models. It is possible to extend the proposed method to a heteroscedastic structural model.}. 
Under this MIMIC model, an item $j$ is DIF-free if $\gamma_{jk} = 0$ for all $k$. 
The location indeterminacy under this model leads to the following ML1 condition for identifying the true model parameters $\Xi^* = \{\beta^*_k,  a_j^*, d_j^*,   \gamma_{jk}^*, k = 1, ..., K, j = 1, ..., J,\}$:
$$\sum_{j=1}^{J}  \vert \gamma_{jk}^* \vert < \sum_{j=1}^{J}  \vert \gamma_{jk}^* -a_j^* c_k\vert,$$
for $c_k\neq 0$, $k = 1, ..., K$. 

We note that this ML1 condition is similar to that under the multi-group setting. This is because the multi-group setting can be written in a very similar form as the current MIMIC model (by representing the groups using a covariate vector with dummy variables), except that the structural model under the multi-group setting allows heteroscedasticity. We also note that the current model assumes that a DIF effect is a linear combination of the covariates, which may seem inflexible, especially when comparing with the tree-based methods \citep{strobl2015rasch,tutz2016item,bollmann2018item}. However, we note that one can always move beyond the linearity by including transformations of the raw covariates (e.g., using spline basis) into the covariate vector and increasing the dimension of the DIF parameter vector $\boldsymbol\gamma_j$ simultaneously.

\paragraph{Ordinal response data.} Finally, we note that the proposed method can be extended to 
IRT models for other types of response data. To elaborate, we consider the generalized partial credit model (GPCM) \citep{muraki1992generalized} for ordinal response data as an example. 
For simplicity, we focus on the two-group setting (i.e., $x_i \in \{0,1\}$) and uniform DIF. Let $\{0, 1, ..., m_j\}$ be the ordered categories of item $j$. Then the measurement model becomes 
$$\frac{P(Y_{ij} = k\vert \theta_i = \theta, x_i)}{P(Y_{ij} = k-1\vert \theta_i = \theta, x_i)} = {\exp(a_j\theta + d_{jk} + \gamma_{jk}x_i)}, k = 1, ...,m_j,$$
where the DIF parameters $\gamma_{jk}$ depend on both the item and the category. 
We keep the structural model the same as in Section~\ref{subsec:str}. 
Under this model, 
an item $j$ is DIF-free if $\gamma_{jk} = 0$ for all $k$.
The location indeterminacy under this model leads to the following ML1 condition for identifying the true model parameters $\Xi^* = \{\beta^*, (\sigma^*)^2, a_j^*, d_j^*,   \gamma_{jk}^*, k = 1, ..., m_j, j = 1, ..., J\}$:
$$\sum_{j=1}^{J}\sum_{k=1}^{m_j} \vert \gamma_{jk}^* \vert < \sum_{j=1}^{J}\sum_{k=1}^{m_j} \vert \gamma_{jk}^* -a_j^* c\vert,$$
for all $c\neq 0$.

 }


\section{Simulation Study}\label{sec:sim}


This section conducts simulation studies to evaluate the performance of the proposed method and compare it with the  likelihood ratio test (LRT) method \citep{thissen1988use} and the LASSO method \citep{bauer2020simplifying}. Note that the LRT method requires a known anchor item set. Correctly specified anchor item sets with different sizes will be considered when applying the LRT method.

In the simulation, we set the number of items $J = 25$, and consider two settings for the sample sizes, $N = 500$, and 1000. The parameters of the true model are set as follows. First, the discrimination parameters are set between 1 and 2, {and we consider two sets of easiness parameters with one small $d_j$ set between $-1$ and 1 and another large $d_j$ set between $-2$ and 2, respectively.} Their true values are given in Table~\ref{tab: sim-parameters}. 
The observations are split into groups of equal sizes, indicated by $x_i = 0$, and 1.  The parameter $\beta$ in the structural model is set to 0.5 {and the parameter $\sigma$ is set to 0.5}, so that the latent trait distribution is standard normal {$N(0, 1)$}  and {$N(0.5, 0.5^2)$} for the reference and focal groups, respectively. {We consider six settings for the DIF parameters, three settings with DIF item proportions from high to low at smaller absolute DIF parameter values, and the other three with DIF item proportions from high to low at larger absolute  DIF parameter values. Specifically, at smaller and larger absolute DIF parameter values, the three settings contain 5, 10 and 14 DIF items out of 25 items for low, medium and high DIF proportions, respectively.} Their true values are given in Table~\ref{tab: sim-parameters}. For all sets of the DIF parameters, the ML1 condition is satisfied. The combinations of settings for the sample sizes and DIF parameters lead to 24 settings in total. For each setting, 100 independent datasets are generated.

We first evaluate the accuracy of the proposed estimator $\hat \Xi$ given by Algorithm~\ref{alg:ML1}. Table~\ref{tab: sim-mse} shows the mean squared errors (MSE) for $\beta$ {and $\sigma$} and the average MSEs for $a_j$s, $d_j$s, and $\gamma_j$s that are obtained by averaging the corresponding MSEs over the $J$ items. As we can see, these MSEs and average MSEs are small in magnitude and decrease as the sample size of individuals $N$ increases under each setting. This observation aligns with our consistency result in Theorem~\ref{thm:consistency}.

We then compare the proposed method and the LRT method in terms of their performances on statistical inference. Specifically, we focus on whether FDR can be controlled when applying the B-H procedure to the p-values obtained from the two methods. The comparison results are given in Table~\ref{tab: sim-FDR}. As we can see, FDR is controlled to be below the targeted level for   the proposed method and the LRT method with 1, 5, and 10 anchor items under all settings. 

When anchor items are known, the standard error can be computed for each estimated $\gamma_j$, and thus the corresponding Wald interval can be constructed. We compare the coverage rates of the confidence intervals given by Algorithm~\ref{alg:bootstrap} and the Wald intervals that are based on five anchor items. The results are shown in Figure~\ref{fig:coverage}. We see that the coverage rates from both methods are comparable across all settings and are close to the 95\% targeted level. Note that these coverage rates are calculated based on only 100 replicated datasets, which may be slightly affected by the Monte Carlo errors.


Finally, we compare the detection power of different methods based on the  receiver operating characteristic (ROC) curves. For a given method, 
a ROC curve is constructed by plotting the true positive rate (TPR) against the false positive rate (FPR) at different threshold settings. 
More specifically, ROC curves are constructed for the LASSO methods by varying the corresponding tuning parameters $\lambda$ from $0.02$ to $0.2$. ROC curves are also constructed by the LRT method with 
1, 5, and 10 anchor items, respectively. Note that for the LRT method, the TPR and FPR are calculated based on the non-anchor items.  
For each method, an average ROC curve is obtained based on the 100 replications, for which 
the area under the ROC curve (AUC) is calculated. A larger AUC value indicates better detection power. 
The AUC values for different methods across our simulation settings are given in Table \ref{tab: sim-AUC1}. 
According to the AUC values, the proposed procedure, that is, the p-value based method from Algorithm~\ref{alg:bootstrap}, 
performs better than the rest. That is, without knowing any anchor items, the proposed procedure performs better than the LRT method that knows 1 or 5 anchor items, and has similar performance as the LRT method that knows 10 anchor items under some settings with large DIF or large sample size $N$. 
The superior performance of the proposed procedures is brought by the use of the ML1 condition, which identifies the model parameters using information from all the items.
Based on the AUC values, we also see that the LASSO procedure performs similarly to the proposed procedures under some of the large DIF settings, but is less accurate under the small DIF settings.

\begin{figure}[h!]
    \centering
    \includegraphics[scale = 0.55]{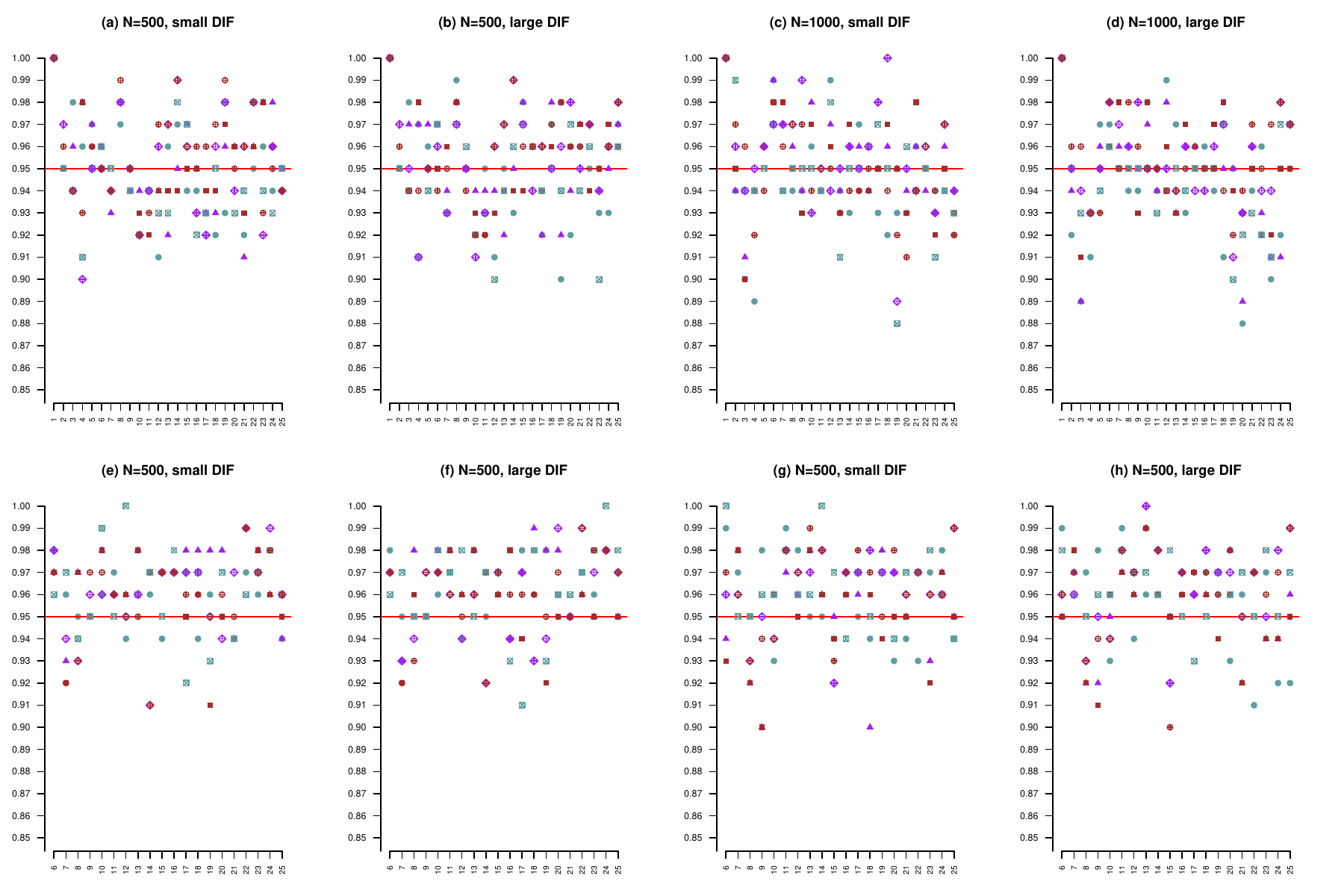}
    \caption{Scatter plots of the coverage rates of the 95\% confidence intervals for $\gamma_j^*$'s. x-axes and y-axes are labelled with item numbers and coverage rates, respectively. {Panels (a) - (d) correspond to our proposed method, and panels (e) - (h) correspond to the Wald intervals constructed with five anchor items. Blue solid circle (\protect\includegraphics[height=0.8em]{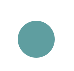}) corresponds to small $d_j$ with high proportion DIF items. Purple solid triangle (\protect\includegraphics[height=0.8em]{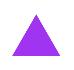}) corresponds to small $d_j$ with medium proportion DIF items. Red solid square (\protect\includegraphics[height=0.8em]{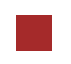}) corresponds to  small $d_j$ with low proportion DIF items. Blue square cross (\protect\includegraphics[height=1em]{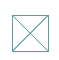}) corresponds to large $d_j$ with high proportion DIF items. Purple diamond plus (\protect\includegraphics[height=1em]{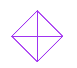}) corresponds to large $d_j$ with medium proportion DIF items. Red circle plus (\protect\includegraphics[height=1em]{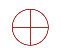}) corresponds to large $d_j$ with low proportion DIF items. }}
    \label{fig:coverage}
\end{figure}

\begin{table}[h!]
\small
\caption{Discrimination, easiness and DIF parameter values used in the simulation studies.}
\begin{center}
\begin{tabular}{c|c|c|c|c|c|c|c|c|c}
\hline 
 \multirow{2}{*}{Item number} & \multirow{2}{*}{$a_j$} & \multicolumn{2}{c|}{$d_j$} & \multicolumn{3}{c|}{$\gamma_j$ (Small DIF)} & \multicolumn{3}{c}{$\gamma_j$ (Large DIF)}
 \\
 \cline{3-10} 
  & & Small $d_j$ & Large $d_j$ & High & Medium & Low & High & Medium & Low \\
\hline
1 &  1.30 & 0.80 & 0.80 & 0.00 & 0.00 & 0.00 & 0.00 & 0.00& 0.00 \\ 
  2 &  1.40 & 0.20 & -0.40 & 0.00 & 0.00 & 0.00 & 0.00 & 0.00& 0.00 \\ 
  3 & 1.50 & -0.40 & -1.20 & 0.00 & 0.00 & 0.00 & 0.00 & 0.00& 0.00 \\ 
  4 &  1.70 & -1.00 & -2.00 & 0.00 & 0.00 & 0.00 & 0.00 & 0.00& 0.00 \\ 
  5 &  1.60 & 1.00 & 2.00 & 0.00 & 0.00 & 0.00 & 0.00 & 0.00& 0.00 \\ 
  6 &  1.30 & 0.80 & 0.80 & 0.00 & 0.00 & 0.00 & 0.00 & 0.00& 0.00 \\ 
  7 &  1.40 & 0.20 & -0.40 & 0.00 & 0.00 & 0.00 & 0.00 & 0.00& 0.00 \\ 
  8 &  1.50 & -0.40 & -1.20 & 0.00 & 0.00 & 0.00 & 0.00 & 0.00& 0.00 \\ 
  9 &  1.70 & -1.00 & -2.00 & 0.00 & 0.00 & 0.00 & 0.00 & 0.00& 0.00 \\ 
  10 & 1.60 & 1.00 & 2.00 & 0.00 & 0.00 & 0.00 & 0.00 & 0.00& 0.00 \\ 
  11 &  1.30 & 0.80 & 0.80 & 0.00 & 0.00 & 0.00 & 0.00 & 0.00 & 0.00 \\ 
  12 &  1.40 & 0.20 & -0.40 & -0.60 & 0.00 & 0.00 & -1.20 & 0.00 & 0.00 \\ 
  13 &  1.50 & -0.40 & -1.20 & 0.60 & 0.00 & 0.00 & 1.20 & 0.00 & 0.00  \\ 
  14 &  1.70 & -1.00 & -2.00 & -0.65 & 0.00 & 0.00 & -1.30 & 0.00 & 0.00 \\ 
  15 &  1.60 & 1.00 & 2.00 & 0.70 & 0.00 & 0.00 & 1.40 & 0.00 & 0.00 \\ 
  16 & 1.30 & 0.80 & 0.80 & -0.60 & -0.60 & 0.00 & -1.20 & -1.20 & 0.00 \\ 
  17 &  1.40 & 0.20 & -0.40 & 0.60 & 0.60  & 0.00 & 1.20 & 1.20 & 0.00 \\ 
  18 &  1.50 & -0.40 & -1.20 & -0.65 & -0.65 & 0.00 & -1.30 & -1.30 & 0.00 \\ 
  19 &  1.70 & -1.00 & -2.00 & 0.70 & 0.70 & 0.00 & 1.40 & 1.40 & 0.00 \\ 
  20 &  1.60 & 1.00 & 2.00 & 0.65 & 0.65 & 0.00 & 1.30 & 1.30 & 0.00 \\ 
  21 &  1.30 & 0.80 & 0.80 & -0.60 & -0.60 & -0.60 & -1.20 & -1.20 & -1.20 \\ 
  22 &  1.40 & 0.20 & -0.40 & 0.60 & 0.60 & 0.60 & 1.20 & 1.20 & 1.20  \\ 
  23 & 1.50 & -0.40 & -1.20 & -0.65 & -0.65 & -0.65 & -1.30 & -1.30 & -1.30 \\ 
  24 &  1.70 & -1.00 & -2.00 & 0.70 & 0.70 & 0.70 & 1.40 & 1.40 & 1.40 \\ 
  25 & 1.60 & 1.00 & 2.00 & 0.65 & 0.65 & 0.65 & 1.30 & 1.30 & 1.30\\ 
\hline
\end{tabular}
\label{tab: sim-parameters}
\end{center}
\end{table}

\begin{table}[h!]
\begin{center}
\caption{Average mean squared errors of the estimated parameters in the simulation studies. Mean squared errors are first evaluated by averaging out of 100 replications and then averaged across 25 items to obtain the average mean squared errors for $\bm{a}$, $\bm{d}$ and $\bm{\gamma}$. The mean squared errors for $\beta$ {and $\sigma$} are presented.}

\begin{tabular}{P{1.5cm}|P{1.5cm}|P{0.5cm}|P{1cm}P{1cm}P{1cm}|P{1cm}P{1cm}P{1cm} }
 \hline
\multicolumn{1}{c|}{}&\multicolumn{1}{c|}{}&\multicolumn{1}{c|}{}&\multicolumn{3}{|c}{Small DIF} &\multicolumn{3}{|c}{Large DIF}\\
 \hline
 &&& High & Medium & Low & High & Medium & Low\\
 \hline
\multirow{10}{*}{$N = 500$} &   \multirow{5}{*}{Small $d_j$} & $\bm{a}$ & 0.0482 & 0.0485  & 0.0485 & 0.0502 & 0.0490 & 0.0486 \\
 &   & $\bm{d}$ & 0.0316 & 0.0317 & 0.0318  & 0.0317 & 0.0317 & 0.0316 \\
 &    & $\bm{\gamma}$ & 0.0614 & 0.0612 & 0.0609 & 0.0670 & 0.0650 & 0.0623 \\
 &    & ${\beta}$ & 0.0010 & 0.0010  & 0.0010 & 0.0011 & 0.0011 & 0.0010\\
 &    & ${\sigma}$ & 0.0016 & 0.0017  & 0.0016 & 0.0017 & 0.0017 & 0.0018 \\
 \cline{2-9}
&  \multirow{5}{*}{Large $d_j$} & $\bm{a}$ & 0.0562 & 0.0552  & 0.0552 & 0.0589 & 0.0575 & 0.0560\\
 &    & $\bm{d}$ & 0.0467 & 0.0467 & 0.0470 & 0.0475 & 0.0476 & 0.0476\\
 &    & $\bm{\gamma}$ & 0.0873 & 0.0854 & 0.0834 & 0.1089 &0.1009 &  0.0903\\
 &    & ${\beta}$ & 0.0013 & 0.0012 & 0.0012 & 0.0011 & 0.0012 & 0.0013 \\
 &    & ${\sigma}$ & 0.0014 & 0.0014 & 0.0015 & 0.0015 & 0.0015 & 0.0015\\
  \cline{1-9}
 \multirow{10}{*}{$N = 1000$} &   \multirow{5}{*}{Small $d_j$} & $\bm{a}$ & 0.0227 & 0.0222 & 0.0222 & 0.0223 & 0.0221 & 0.0222\\
 &    & $\bm{d}$ & 0.0145 & 0.0145 & 0.0145 & 0.0145 & 0.0145 & 0.0145 \\
 &    & $\bm{\gamma}$ & 0.0291 & 0.0289  & 0.0287 &  0.0335 & 0.0320 & 0.0298 \\
 &    & ${\beta}$ & 0.0004 & 0.0004 & 0.0004 & 0.0005 & 0.0005 & 0.0004 \\
 &    & ${\sigma}$ &  0.0004 & 0.0005 & 0.0005 & 0.0005 & 0.0005 & 0.0005 \\
  \cline{2-9}
&  \multirow{5}{*}{Large $d_j$} & $\bm{a}$ & 0.0263 & 0.0261 & 0.0262 & 0.0270 & 0.0267 & 0.0264\\
 &    & $\bm{d}$ & 0.0223 & 0.0224 & 0.0225 & 0.0227 & 0.0224 & 0.0226\\
 &    & $\bm{\gamma}$ & 0.0412 & 0.0401 & 0.0392 & 0.0500 & 0.0461 & 0.0418 \\
 &    & ${\beta}$ & 0.0005 & 0.0005 & 0.0005 & 0.0006 & 0.0005 & 0.0005 \\
 &    & ${\sigma}$ & 0.0005 & 0.0005 & 0.0005 & 0.0006 & 0.0005 & 0.0005 \\
  \cline{1-9}
 \hline
\end{tabular}
\label{tab: sim-mse}
\end{center}
\end{table}

 \begin{table}
\begin{center}
\caption{Comparison of the FDR of the proposed p-value based method and the LRT method with 1, 5 and 10 anchor items respectively at the FDR control of 5\%.
The values are averaged out of 100 replications.}

\begin{tabular}{P{1.5cm}|P{1.4cm}|P{2cm}|P{1cm}P{1cm}P{1cm}|P{1cm}P{1cm}P{1cm} }
 \hline
\multicolumn{1}{c|}{}&\multicolumn{1}{c|}{}&\multicolumn{1}{c|}{}&\multicolumn{3}{c|}{Small DIF} &\multicolumn{3}{c}{Large DIF}\\
 \hline
 &&& High & Medium & Low & High & Medium & Low\\
 \hline
\multirow{8}{*}{$N = 500$} & \multirow{4}{*}{Small $d_j$} & proposed  & 0.0167 & 0.0255 & 0.0298  & 0.0192  & 0.0213  & 0.0319  \\
 &    & LRT 1 & 0.0089 & 0.0071 &  0.0137 &  0.0119 & 0.0148  & 0.0233  \\
 &    & LRT 5 & 0.0071 & 0.0181 & 0.0267  &  0.0122  & 0.0195  & 0.0394  \\
 &    & LRT 10 & 0.0033 &  0.0148 & 0.0283 & 0.0027 & 0.0154 & 0.0329\\
 \cline{2-9}
&  \multirow{4}{*}{Large $d_j$} & proposed  & 0.0240 & 0.0222  & 0.0323 & 0.0231 & 0.0249 & 0.0404\\
 &    & LRT 1 & 0.0164 & 0.0212 & 0.0267  & 0.0152  & 0.0216  & 0.0280  \\
 &    & LRT 5 & 0.0124 & 0.0221  & 0.0308 & 0.0128 & 0.0215 & 0.0246 \\
 &    & LRT 10 & 0.0031 & 0.0219  & 0.0237 & 0.0029 & 0.0159 & 0.0408\\
  \cline{1-9}
 \multirow{8}{*}{$N = 1000$} & \multirow{4}{*}{Small $d_j$} & proposed  & 0.0238 & 0.0277 & 0.0349 & 0.0229  & 0.0269  & 0.0425  \\
 &    & LRT 1 & 0.0087 & 0.0083 & 0.0152  & 0.0083  & 0.0131  & 0.0170  \\
 &    & LRT 5 & 0.0100 & 0.0217 & 0.0327 & 0.0087 & 0.0218 & 0.0341 \\
 &    & LRT 10 & 0.0021 &  0.0191 & 0.0389 & 0.0020 & 0.0162 & 0.0408\\
  \cline{2-9}
&  \multirow{4}{*}{Large $d_j$} & proposed  & 0.0217 & 0.0302 & 0.0390 & 0.0227  & 0.0333  & 0.0444  \\
 &    & LRT 1 & 0.0165 & 0.0166 & 0.0248  & 0.0172 & 0.0193  & 0.0237  \\
 &    & LRT 5 & 0.0114 & 0.0155 & 0.0249 & 0.0100 & 0.0162 & 0.0250 \\
 &    & LRT 10 & 0.0007 & 0.0062  & 0.0218 & 0.0013 & 0.0079 & 0.0260 \\
 \hline
\end{tabular}
\label{tab: sim-FDR}
\end{center}
\end{table}

\begin{table}[h!]
\begin{center}
\caption{Comparison of AUC of the proposed p-value based method, the LASSO method and the LRT method with 1, 5 and 10 anchor items respectively.}

\begin{tabular}{P{1.5cm}|P{1.4cm}|P{2cm}|P{1cm}P{1cm}P{1cm}|P{1cm}P{1cm}P{1cm} }
 \hline
\multicolumn{1}{c|}{}&\multicolumn{1}{c|}{}&\multicolumn{1}{c|}{}&\multicolumn{3}{c|}{Small DIF} &\multicolumn{3}{c}{Large DIF}\\
 \hline
 &&& High & Medium & Low & High & Medium & Low\\
 \hline
 \multirow{10}{*}{$N = 500$} & \multirow{5}{*}{Small $d_j$} & proposed  & 0.936 & 0.933 & 0.942  & 0.996 & 0.997  & 0.998   \\
&     & LASSO  & 0.802 & 0.805 & 0.789  & 0.992  & 0.991 & 0.987   \\
 &    & LRT 1 & 0.861 & 0.853 & 0.867  & 0.982 & 0.984 & 0.982   \\
 &    & LRT 5 & 0.915 & 0.917 & 0.920  & 0.992 & 0.991 &  0.988  \\
 &    & LRT 10 & 0.929 & 0.919 & 0.922  & 0.989 & 0.995 & 0.989   \\
 \cline{2-9}
&  \multirow{5}{*}{Large $d_j$} & proposed  & 0.910 & 0.915 & 0.917  & 0.986  & 0.988 & 0.990   \\
&     & LASSO  & 0.685 & 0.672 & 0.670 & 0.920 & 0.938 &  0.936  \\
 &    & LRT 1 & 0.823 & 0.800 & 0.826  & 0.966 & 0.966 & 0.969    \\
 &    & LRT 5 & 0.884 & 0.878 & 0.881  & 0.980 & 0.980 & 0.978   \\
 &    & LRT 10 & 0.897 & 0.875 & 0.884  & 0.983 & 0.975 & 0.977    \\
  \cline{1-9}
 \multirow{10}{*}{$N = 1000$} & \multirow{5}{*}{Small $d_j$} & proposed  & 0.984 & 0.986 & 0.987  & 1.000 & 1.000 & 1.000   \\
&     & LASSO  & 0.815 & 0.818 & 0.817 & 0.995  & 0.995   & 0.993   \\
 &    & LRT 1 & 0.965 & 0.968 & 0.960  & 0.997 &0.997  & 0.994   \\
 &    & LRT 5 & 0.979 & 0.975 & 0.976  & 0.990 & 0.990 & 0.990   \\
 &    & LRT 10 & 0.985 & 0.966 & 0.977  & 0.995 & 0.984 & 0.988   \\
 \cline{2-9}
&  \multirow{5}{*}{Large $d_j$} & proposed  & 0.964 & 0.964 & 0.965  & 0.997 & 0.998 & 0.998   \\
&     & LASSO  & 0.685 & 0.673 & 0.667  & 0.937 & 0.953 &  0.947  \\
 &    & LRT 1 & 0.944 & 0.942 & 0.941  & 0.989  & 0.995 & 0.992   \\
 &    & LRT 5 & 0.962 & 0.961 & 0.962  & 0.990  & 0.993 &  0.992  \\
 &    & LRT 10 & 0.972 & 0.953 & 0.962  & 1.000 & 0.998 & 0.992   \\
 \hline
\end{tabular}
\label{tab: sim-AUC1}
\end{center}
\end{table}

\section{Application to EPQ-R Data}\label{sec:real}

DIF methods have been commonly used for assessing the measurement invariance of personality tests (e.g., \citealp{escorial2007analysis}, \citealp{millsap2012statistical}, \citealp{thissen1986beyond}).  In this section, we apply the proposed method to  the Eysenck Personality Questionnaire-Revised (EPQ-R, \citealt{eysenck1985revised}), a personality test that has been intensively studied and received  applications worldwide \citep{fetvadjiev2015measures}. 
The EPQ-R has three scales that measure the Psychoticism (P), Neuroticism (N) and Extraversion (E) personality traits, respectively.  We analyze the long forms of the three personality scales that 
consist of 32, 24, and 23 items, respectively. Each item has binary responses of ``yes'' and ``no'' that are indicated by 1 and 0, respectively.  This analysis is based on data from an EPQ-R study collected from 1432 participants in the United Kingdom. Among these participants, 823 are females, and 609 are males. Females and males are indicated by $x_i = 0$ and 1, respectively. We study the DIF caused by gender. The three scales are analyzed separately using the proposed methods.

The results are shown through Tables~\ref{tab: setP}--\ref{tab: setN}, and Figure \ref{fig: real_CI}. Specifically, Tables ~\ref{tab: setP}--\ref{tab: setN} present the p-values from the proposed method for testing $\gamma_j = 0$ and the detection results for the P, E, N scales, respectively. 
For each table, the items are ordered by the p-values in increasing order. 
The items indicated by ``F'' are the ones detected by the B-H procedure with FDR level 0.05, and those indicated by {``L''} are the ones detected by 
{LASSO method whose tuning parameter $\lambda$ is chosen by BIC.}
The item IDs are consistent with those in Appendix 1 of \cite{eysenck1985revised}, where the item descriptions are given. The three panels of 
Figure \ref{fig: real_CI} further give the point estimate and confidence interval for each $\gamma_j$ parameter, for the three scales, respectively.
Under the current model parameterization, a positive DIF parameter means that a male participant is more likely to answer ``yes'' to the item than a female participant, given that they have the same personality trait level. We note that the absolute values of $\hat \gamma_j$ are all below 1, suggesting that there are no items with very large gender-related DIF effects. 


From Tables~\ref{tab: setP}--\ref{tab: setN}, we see that all three scales have some items whose p-values are close to zero, suggesting that gender DIF may exist across the three scales. The DIF items selected by the B-H procedure at the 5\% FDR level seem sensible. In what follows, we give some examples. \yc{For the P scale, the top four items are selected. These items are  ``14. Do you dislike people who don't know how to behave themselves?'', ``7. Would being in debt worry you?'',  ``34. Do you have enemies who want to harm you?'' and ``81. Do you generally ‘look before you leap’?'', with the DIF effect of item 7 being negative while those of the rest being positive. The discovery of items 14, 7 and 34 is consistent with the personality literature, where previous research has found that women are more gregarious and trusting  than men while men tend to be more risk-taking \citep{costa2001gender,feingold1994gender}. It is unclear from previous research why item 81 has a positive DIF effect. We conjecture that it is due to sociocultural influences. This result is consistent with that of  another P-scale item  ``2. Do you stop to think things over before doing anything?" whose statement is similar to item 81. Although not selected by the B-H procedure, the estimated DIF effect of this item is also positive, and its 95\% confidence interval does not include zero. 

For the E scale, eleven items are selected by the B-H procedure. Here, we discuss the top five items, including 
``63. Do you nearly always have a ‘ready answer’ when people talk to you?'', ``36. Do you have many friends?'',  ``90. Do you like plenty of bustle and excitement around you?'', ``6. Are you a talkative person?" and ``33. Do you prefer reading to meeting people?", where items 63 and 33  have positive DIF effects while the rest three have negative DIF effects. The discovery of these items is not surprising. 
The DIF effects of items 36, 90, 6 and 33 are consistent with previous observations that  women are more motivated to involve in social activities and tend to have more interconnected and affiliative social groups \citep{cross1997models}, which may be explained by the theory of self-construals \citep{markus1991culture}. The DIF effect of item 63 is consistent with the previous findings that men tend to score higher on assertiveness \citep{costa2001gender,feingold1994gender,weisberg2011gender}.

For the N scale, ten items are selected by the B-H procedure. Again, we discuss the top five items, including 
``8. Do you ever feel ‘just miserable’ for no reason?'', ``22. Are your feelings easily hurt?'', ``87. Are you easily hurt when people find fault with you or the work you do?'', ``84. Do you often feel lonely?" and ``70. Do you often feel life is very dull?", where items 8, 22 and 87 have negative DIF effects and items 84 and 70 have positive DIF effects. The discovery of items 8, 22, and 87 is consistent with the fact that women tend to score higher in tender-mindedness \citep{costa2001gender,feingold1994gender}. The positive DIF effects of items 84 and 70 may again be explained by the theory of self-construals \citep{markus1991culture}.

From Tables~\ref{tab: setP}--\ref{tab: setN}, we see that the selection based on the B-H procedure with FDR level 0.05 and that based on the LASSO procedure are quite consistent but do not exactly match.  
For the P-scale, the two procedures agree on four DIF detections, while the LASSO procedure additionally identifies four DIF items. For the E scale, they agree on six DIF detections, while the B-H procedure additionally identifies five items and the Lasso procedure additionally identifies one. Finally, for the N scale, the number of common detections is eight. Besides that, there are two items uniquely identified by the B-H procedure and four items uniquely identified by the Lasso procedure. 
Since the two procedures have different objectives (controlling FDR versus consistent model selection), it is not surprising that their results are not exactly the same. A consensus between the two methods suggests strong evidence, and thus, these common detections should draw our attention and be investigated first. For example, the content of the DIF items may be reviewed by experts, and new data may be collected to test these DIF effects through a confirmatory analysis. When there are enough resources, the items identified by one of the methods should also be investigated.}

\begin{figure}[h!]
    \centering
    \includegraphics[scale = 0.6]{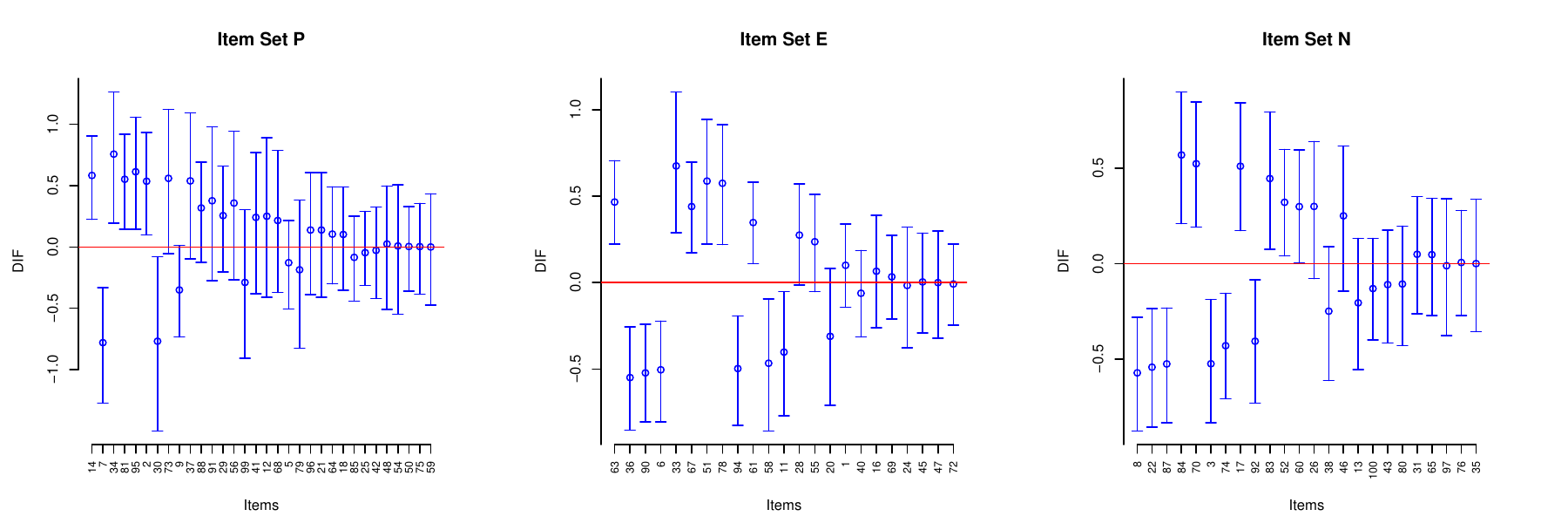}
    \caption{Plots of 95\% confidence intervals for the DIF parameters $\gamma_j^{*'}s$ on scale P, N, and E data sets. The red horizontal lines denote $\gamma=0$. Items are arranged according to the increasing p-values.}
    \label{fig: real_CI}
\end{figure}

\begin{table}[h!]
\small
\centering
\caption{P-values for testing $\gamma_j^*=0$ for items in P scale. Note that the items are ordered in increasing p-values. Items selected by the B-H procedure with FDR control at 5\% and {the LASSO method are identified using ``F'' and ``L''}, respectively, following the item numbers.}
\begin{tabular}{ c|c|c|c|c|c|c|c|c} 
\hline
Item & 14 F\textcolor{black}{L}  &7 F\textcolor{black}{L} &34 F\textcolor{black}{L}& 81 F\textcolor{black}{L}& 95 \textcolor{black}{L}&  2 \textcolor{black}{L}& 30  & 73 \\
\hline
p-value &0.0014& 0.0015& 0.0057& 0.0061& 0.0104& 0.0140& 0.0364 &0.0619\\
\hline
Item &9\textcolor{black}{L} &37 \textcolor{black}{L}&88 &91 &29& 56& 99 &41\\
\hline
p-value &0.0625& 0.0681 &0.1235 &0.2217 &0.2304 &0.2442 & 0.3389 &0.3780\\
\hline
Item &12 &68 & 5 &79 &96 &21 &64 &18\\
\hline
p-value & 0.4389 &0.4557& 0.4567& 0.5187 &0.5515& 0.5529 &0.5819 &0.5888\\
\hline
Item & 85& 25& 42& 48& 54& 50& 75& 59\\
\hline
p-value&0.6080 &0.7527 &0.8441 &0.8787&
0.9447 &0.9528 &0.9559 &0.9616\\
\hline
\end{tabular}
\label{tab: setP}
\end{table}

\begin{table} 
\small
\centering
\caption{P-values for testing $\gamma_j^*=0$ for items in E scale. Note that the items are ordered in increasing p-values. Items selected by the B-H procedure with FDR control at 5\% and \textcolor{black}{the LASSO method are identified using  ``F'' and ``L''}, respectively, following the item numbers.}
\begin{tabular}{ c|c|c|c|c|c|c|c|c} 
\hline
Item & 63 F\textcolor{black}{L}&36 F&90 F& 6 F&33 F\textcolor{black}{L}&67 F\textcolor{black}{L}&51 F\textcolor{black}{L}&78 F\textcolor{black}{L}\\
\hline
p-value &0.0000 &0.0004 &0.0006 &0.0011 &0.0013& 0.0013 &0.0016 &0.0019\\
\hline
Item & 94 F&61 F\textcolor{black}{L}&58 F&11 & 28\textcolor{black}{L}& 55& 20&  1\\
\hline
p-value &0.0031 &0.0051 &0.0199 &0.0310 &0.0644& 0.0958&0.1278& 0.4073\\
\hline
Item &40 &16 &69 &24& 45& 47& 72\\
\hline
p-value & 0.6185 &0.6439 &0.7819 &0.8371 &0.9291& 0.9364 &0.9391\\
\hline
\end{tabular}
\label{tab: setE}
\end{table}

\begin{table}[h!]
\small
\centering
\caption{P-values for testing $\gamma_j^*=0$ for items in N scale. Note that the items are ordered in increasing p-values. Items selected by the B-H procedure with FDR control at 5\% and {the LASSO method are identified using  ``F'' and ``L''}, respectively, following the item numbers.}
\begin{tabular}{ c|c|c|c|c|c|c|c|c} 
\hline
Item &8  F\textcolor{black}{L}&22  F\textcolor{black}{L}&87  F\textcolor{black}{L}&84  F\textcolor{black}{L}&70   F\textcolor{black}{L}&3  F&74  F\textcolor{black}{L}&17 F\textcolor{black}{L}\\
\hline
p-value &0.0004 &0.0006 &0.0007 &0.0014 &0.0016& 0.0026 &0.0026 &0.0037\\
\hline
Item &92  F&83  F\textcolor{black}{L}&52 \textcolor{black}{L} &60 \textcolor{black}{L} &26 \textcolor{black}{L} &38  &46 \textcolor{black}{L} &13 \\
\hline
p-value &0.0130 &0.0152 &0.0264 &0.0487 &0.0994& 0.1553&0.1856 &0.2337\\
\hline
Item & 100  &43  &80  &31  &65  &97  &76 & 35\\
\hline
p-value & 0.3365 &0.4417 &0.4694 &0.7116 &0.7376& 0.9220 &0.9531 &0.9550
\\
\hline
\end{tabular}
\label{tab: setN}
\end{table}

\section{Discussion}\label{sec:dis}

This paper proposes a new method for DIF analysis under a MIMIC model framework. It can accurately estimate the DIF effects of individual items without requiring prior knowledge about an anchor item set and can also provide valid p-values. 
The p-values can be used for the detection of DIF items and controlling the uncertainty in the decisions.  
According to our simulation results,  \yc{the proposed p-value-based procedure
has comparable performance in terms of classifying DIF and non-DIF items}, comparing with the LASSO method of \cite{belzak2020improving}. 
In addition, the p-value-based methods accurately control the item-specific type-I errors and the FDR. 
Finally, the proposed method is applied to the three scales of the Eysenck Personality Questionnaire-Revised to study gender-related DIF.
For each of the three long forms of the P, N, and E scales, around 10 items are detected by the proposed procedures as potential DIF items. The psychological mechanism of these DIF effects is worth further investigation. \yc{While the paper focuses on the two-group setting and uniform DIF, extensions to more complex settings have been discussed in Section~\ref{sec:related}, including non-uniform DIF, multi-group, and continuous covariate, and ordinal response settings.}
An R package has been developed for the proposed procedures that will be published online upon the acceptance of this paper. 

The proposed method has several advantages over the LASSO method. First, the proposed method does not require a tuning parameter to estimate the model parameters, while the LASSO method involves choosing the tuning parameter for the regularization term. Thus, the proposed method is more straightforward to use for practitioners.
Second, we do not need to solve optimization problems that involve maximizing a regularized likelihood function under different tuning parameter choices.
Therefore, the proposed method is computationally less intensive 
since the optimization involving a regularized likelihood function is non-trivial due to both the integral with respect to the latent variables and the non-smooth penalty term.
Finally, the proposed method provides valid statistical inference, which is more difficult for the LASSO method due to the uncertainty associated with the model selection step. With the obtained p-values, the proposed approach can detect the DIF items with controlled type-I error or FDR.  

The current work has some limitations, which offer opportunities for future research.  \yc{First, we note that the proposed method relies heavily on the ML1 condition, which tends to hold when the proportion of DIF-free items is high. While it may be sensible to make this assumption in many applications, there may also be applications where the proportion of DIF items is high, and thus, the ML1 condition fails to hold. Methods remain to be developed under such settings. 
One possible idea is to replace the $L_1$ norm in the ML1 condition with an $L_p$ norm for some $p \in (0,1)$. The $L_p$ norm better approximates the $L_0$ norm; thus, the corresponding condition is more likely to hold under a less sparse setting. However, the computation becomes more challenging when using the $L_p$ norm, as the transformation in Step 2 of Algorithm~\ref{alg:ML1} is no longer a convex optimization problem.} Second, 
as is true for
all simulation studies, we cannot examine all possible
conditions that might occur in applied settings. Additional simulation studies will be conducted in future research to understand the performance of the proposed method better. In particular, sample sizes, item sizes, group sizes and distribution of the DIF items can be varied and tested. \yc{Third, 
although the extensions to several more complex settings have been discussed in Section~\ref{sec:related}, these procedures remain to be implemented and assessed by simulation studies.}
Finally, the current work focuses on the type-I error and FDR as error metrics that concern falsely detecting non-DIF items as DIF items. In many applications of measurement invariance, it may also be of 
interest to consider an error metric that concerns the false detection of DIF items as DIF-free. 
Suitable error metrics, as well as methods for controlling such error metrics, remain to be proposed.

Although we focus on the DIF detection problem, the proposed method is also closely related to the problem of linking multiple groups' test results in the violation of measurement invariance \citep{asparouhov2014multiple,haberman2009linking,robitzsch2020lp}. \cite{robitzsch2020lp} proposed a linking approach based on an $L_p$ loss function, which is similar in spirit to the proposed method but focuses on linking multiple groups rather than DIF detection. We believe the proposed method can easily adapt to the linking problem to provide consistent parameter estimation and valid statistical inference. This problem is left for future investigation. 

\clearpage
\appendix

\noindent
{\Large Appendix}

\numberwithin{equation}{section}

This appendix contains additional proofs of all the proposition and theorems in Section \ref{app:proof} and discusses asymptotic distribution of $\tilde \Xi$ and the implementation details of Algorithms \ref{alg:ML1} and \ref{alg:bootstrap} in Section \ref{app:inference}. 

\section{Proofs of Propositions and Theorems}\label{app:proof}
 
 \begin{proof}[Proof of Proposition~\ref{prop:ML1}]
 Note $h$ is differentiable for all $c\neq 0$ with,  
 \begin{align*}
   \triangledown h(c)=\sum_{j=1}^J \vert a_j\vert \cdot sign(a_j^* c - \gamma_j^*),\quad \quad c\neq 0.
 \end{align*}
 Further note that $sign(a_j^* c - \gamma_j^*)=0$ when $c=\gamma_j^*/a_j^*.$ Hence we have
 \begin{align*}
&sign(a_j^* c - \gamma_j^*) >0 \quad \text{whenever}\quad c>\frac{\gamma_j^*}{a_j^*},\numberthis\label{eq: pos-sign-direction}\\
&sign(a_j^* c - \gamma_j^*)  < 0 \quad \text{whenever}\quad c<\frac{\gamma_j^*}{a_j^*}.\numberthis\label{eq: negative-sign-direction}
 \end{align*}
Consider the right derivative (positive directional derivative) of $h$ at $0$ from $+1$ direction, 
\begin{align*}
   \partial h^+(0) := \lim_{c \downarrow 0} \frac{h(c)-h(0)}{c}.
\end{align*} 
By the definition of right derivative of $h$ at $0$, \eqref{eq: pos-sign-direction} and \eqref{eq: negative-sign-direction}, we can rewrite $\partial h^+(0)$ equivalently as follows,
\begin{align*}
  \partial h^+(0)=   \sum_{j=1}^J |a_j^*| \left(- I\big(\frac{\gamma_j^*}{a_j^*} > 0\big) + I\big(\frac{\gamma_j^*}{a_j^*} \leq 0\big)\right).\numberthis\label{eq: pos-direvative}
\end{align*}
Similarly, define the left derivative (negative directional derivative) of $h$ at $0$ from $-1$ direction, $$\partial h^-(0):= \lim_{c \uparrow 0} \frac{h(c)-h(0)}{c}.$$
By the definition of left derivative $\partial h^-(0)$, \eqref{eq: pos-sign-direction} and \eqref{eq: negative-sign-direction}, we can rewrite $\partial h^-(0)$ equivalently as follows,
\begin{align*}
  \partial h^-(0)=  \sum_{j=1}^J |a_j^*| \left(- I\big(\frac{\gamma_j^*}{a_j^*} \geq 0\big) + I\big(\frac{\gamma_j^*}{a_j^*} < 0\big)\right).\numberthis\label{eq: neg-direvative}
\end{align*}
Since $h$ is convex, we must have $\argmin_{c} h(c) = 0$ if and only if $\partial h^+(0) >0$ and $\partial h^-(0)<0$ \citep{boyd2004convex, shor2012minimization}. From \eqref{eq: pos-direvative}, \eqref{eq: neg-direvative} and the fact that ML1 Condition \eqref{eq:ML1} is equivalent to $\argmin_c h(c) = 0$, the result of the proposition follows directly.
 \end{proof}

 \begin{proof}[Proof of Corollary~\ref{coro:1}]
  \textcolor{black}{By the definition of $\rho^*$, Condition~\eqref{eq:suff1} is equivalent to
     \begin{equation}
        \min_j\{|a_j^*|\} \sum_{j=1}^J I(\gamma_j^*/a_j^* \leq  0) > \max_j\{|a_j^*|\} \sum_{j=1}^J I(\gamma_j^*/a_j^* >  0). \nonumber
     \end{equation}
    For the left-hand side and right-hand side of the above inequality, we have
     \begin{eqnarray}
          \min_j\{|a_j^*|\} \sum_{j=1}^J I(\gamma_j^*/a_j^* \leq  0) &<& \sum_{j=1}^J |a_j^*| I(\gamma_j^*/a_j^* \leq  0); \nonumber \\
          \max_j\{|a_j^*|\} \sum_{j=1}^J I(\gamma_j^*/a_j^* >  0) &>& \sum_{j=1}^J|a_j^*| I(\gamma_j^*/a_j^* >  0). \nonumber
     \end{eqnarray}
     Therefore, Condition~\eqref{eq:suff1} implies
     \begin{equation}
         \sum_{j=1}^J |a_j^*| \left(I(\gamma_j^*/a_j^* \leq  0) - I(\gamma_j^*/a_j^* >  0)\right) > 0, \nonumber
     \end{equation}
which is~\eqref{eq:2prop1} in Proposition~\ref{prop:ML1}. Similarly, we have condition~\eqref{eq:suff2} implies
     \begin{equation}
          \sum_{j=1}^J |a_j^*| \left(I(\gamma_j^*/a_j^* <  0) - I(\gamma_j^*/a_j^* \geq  0)\right) > 0. \nonumber
     \end{equation}
which is~\eqref{eq:1prop1} in Proposition~\ref{prop:ML1}. Hence, if Conditions~\eqref{eq:suff1} and~\eqref{eq:suff2} are satisfied, we have Condition \eqref{eq:ML1} holds by Proposition~\ref{prop:ML1}.}
     
 \end{proof}

\begin{proof}[Proof of Theorem~\ref{thm:consistency}]
Since MIMIC model with constraint $\gamma_1^\dagger=0$ is identifiable, by classical asymptotic theory for MLE \citep{van2000asymptotic}, we have $\tilde \Xi$ converges in probability to $\Xi^\dagger.$ That is, as $N\to \infty$, for any $\epsilon>0$, we must have with probability tending to 1 that
$\vert \tilde \beta - \beta^\dagger\vert\leq \epsilon$, 
\textcolor{black}{$\vert \tilde \sigma^2 - (\sigma^2)^\dagger\vert\leq \epsilon$,}
$\vert \tilde \gamma_j - \gamma_j^\dagger\vert \leq \epsilon, \vert \tilde a_j - a_j^\dagger\vert \leq \epsilon$ and $\vert \tilde d_j - d_j^\dagger\vert \leq \epsilon$, for any $j=1,...,J$. Denote $f(c)=\sum_{j=1}^J \vert \gamma_j^\dagger - c a_j^\dagger\vert$ as a function of $c.$ Similarly, denote $f_N(c)=\sum_{j=1}^J \vert \tilde\gamma_j - c \tilde a_j\vert$. Let $c^\dagger = \arg\min_{c} f(c)$ and $\hat c = \arg\min_{c} f_N(c)$, respectively. We seek to establish that $\hat c$ will converge in probability to $c^\dagger$ as $N\to \infty.$ 
First note that by regularity conditions, there exists $C_1<\infty$ such that $J, \vert \gamma_j^\dagger\vert, \vert a_j^\dagger\vert \leq C_1.$ Then, there must exist $C_2<\infty$ such that $\vert c^\dagger \vert \leq C_2.$ Furthermore, note $f_N$ is clearly continuous and convex in $c$, so consistency will follow if $f_N$ can be shown to converge point-wise to $f$ that is uniquely minimized at the true value $c^\dagger$ (typically uniform convergence is needed, but point-wise convergence of convex functions implies their uniform convergence on compact subsets). 
Following the model identifiability and the ML1 condition (\ref{eq:ML1}), $c^\dagger$ is unique. To see this, suppose for contradiction that there exist $c_1$ and $c_2$ such that $c_1\neq c_2$ and $c_1=\arg\min_{c} f(c)$ and $c_2=\arg\min_{c} f(c).$ 
First note that $a_j^\dagger=a_j^*$ for all $j=1,...,J.$
Then by model identifiability, there exists $c_3$ such that $\gamma_j^\dagger=\gamma_j^*+c_3 a_j^*.$
So we have $$c_1=\arg\min_{c} \sum_{j=1}^J \vert \gamma_j^* + (c_3-c) a_j^*\vert$$ and $$c_2=\arg\min_{c} \sum_{j=1}^J \vert \gamma_j^* + (c_3-c) a_j^*\vert.$$
Hence, $\gamma^*=\gamma^\dagger + (c_3-c_1)a_j^*$ and $\gamma^*=\gamma^\dagger + (c_3-c_2)a_j^*$. If ML1 condition (\ref{eq:ML1}) holds, then $c_3=c_1$ and $c_3=c_2.$ This contradicts the assumption $c_1\neq c_2.$ Hence, $c^\dagger$ must be unique. 
For any $\vert c\vert \leq C_2,$\\
\begin{align*}
 & \vert f_N(c) - f(c)\vert\\
& = \Big\vert \sum_{j=1}^J \Big(\vert \tilde\gamma_j - c \tilde a_j\vert-\vert \gamma_j^\dagger - c a_j^\dagger\vert\Big) \Big\vert \\
&\leq \Big\vert \sum_{j=1}^J \Big(\vert (\tilde\gamma_j - c \tilde a_j)- ( \gamma_j^\dagger - c a_j^\dagger)\vert\Big) \Big\vert \\
&=\Big\vert \sum_{j=1}^J \Big(\vert (\tilde\gamma_j - \gamma_j^\dagger)+ c( a_j^\dagger- \tilde a_j)\vert\Big) \Big\vert \\
&\leq \sum_{j=1}^J \Big(\vert \tilde\gamma_j - \gamma_j^\dagger\vert+ \vert c \vert \cdot \vert a_j^\dagger- \tilde a_j\vert\Big) \\
&\leq J \epsilon + \vert c \vert\epsilon.\\
&\leq (C_1+C_2)\epsilon.\\
\end{align*}
Take $\epsilon_1=(C_1+C_2)\epsilon$, it follows that for any fixed $\vert c\vert \leq C_2$, $P\big(\vert f_N(c) - f(c)\vert \leq \epsilon_1\big) \to 1$ as $N\to \infty$. Moreover, following from the uniqueness of $c^\dagger$ and the continuity and the convexity of $f_N(\cdot)$ in $c$, we must have $\vert \hat c - c^\dagger \vert = o_P(1)$ as $N \to \infty.$ 

Note that $\hat \beta=\tilde\beta + \hat c$, \textcolor{black}{$\hat{\sigma}^2 = \tilde{\sigma}^2$}, $\hat \gamma_j = \tilde \gamma_j - \hat c \tilde a_j$, $\hat a_j=\tilde a_j$, $\hat d_j=\tilde d_j$ for all $j=1,...,J.$ From the model identifiability and the ML1 condition (\ref{eq:ML1}), we know that $\beta^*=\beta^\dagger + c^\dagger$, \textcolor{black}{$(\sigma^2)^* = (\sigma^2)^\dagger$},  $\gamma_j^* = \gamma_j^\dagger - c^\dagger a_j^\dagger$, $a_j^*=a_j^\dagger$, $d_j^*=d_j^\dagger$ for all $j=1,...,J.$ Since $\vert \hat c - c^\dagger \vert = o_P(1), \vert \tilde \beta - \beta^\dagger\vert=o_P(1)$, \textcolor{black}{$\vert \tilde{\sigma}^2 - (\sigma^2)^\dagger\vert=o_P(1)$,} $\vert \tilde \gamma_j - \gamma_j^\dagger\vert =o_P(1), \vert \tilde a_j - a_j^\dagger\vert =o_P(1), \vert \tilde d_j - d_j^\dagger\vert =o_P(1)$ as $N\to\infty$, it follows directly from the Slutsky's Theorem that $\vert \hat \beta - \beta^*\vert=o_P(1), \textcolor{black}{\vert \hat{\sigma}^2 - (\sigma^2)^*\vert=o_P(1),} \vert \hat \gamma_j - \gamma_j^*\vert =o_P(1), \vert \hat a_j - a_j^*\vert =o_P(1)$, $\vert \hat d_j - d_j^*\vert =o_P(1)$ as $N\to \infty$. 
\end{proof}

\section{Asymptotic Distribution of $\tilde \Xi$}\label{app:inference}
Since the model is identifiable with constraint $\gamma_1^{\dagger}=0$
and all the regularity conditions in Theorem 5.39 of \cite{van2000asymptotic} are satisfied, hence, by Theorem 5.39 in \cite{van2000asymptotic}, $\tilde \Xi \to $ N$(\Xi^\dagger, \Sigma^*)$ in distribution as $N\to \infty.$ In practice, we use the inverse of the observed Fisher information matrix, denoted by  $\hat \Sigma_N$, which is a consistent estimator of $\Sigma^*$, to draw Monte Carlo samples. Below, we give procedures to evaluate $\hat \Sigma_N$ from the marginal log-likelihood.

Following the notations in the main article, we first provide the complete data log-likelihood function,
\begin{align*}
l(\Xi; Y)=&\sum_{i=1}^{N} \Big[\log \Big\{\frac{1}{\sqrt{2\pi(\textcolor{black}{1_{\{x_i = 0\}} + \sigma^2 1_{\{x_i = 1\}}})}} \exp\left(\frac{-(\theta_i - \beta x_i)^2}{2(\textcolor{black}{1_{\{x_i = 0\}} + \sigma^2 1_{\{x_i = 1\}}})}\right)\Big\}\\
&+ \sum_{j=1}^{J}\big\{ y_{ij}(a_j\theta_i + d_j +\gamma_j x_i) - \log(1+\exp\{a_j\theta_i + d_j +\gamma_j x_i\})\big\}\Big].
\end{align*}
Since $\theta_i$ is considered as a random variable such that \textcolor{black}{$\theta_i\mid x_i \sim $ N$(\beta x_i, 1_{\{x_i = 0\}} + \sigma^2 1_{\{x_i = 1\}})$}, so we will work with the marginal log-likelihood function,
\begin{equation*}
mll(\Xi; Y) = \sum_{i=1}^N \log \Big\{\int \left(\prod_{j=1}^J \frac{\exp(y_{ij}(a_j\theta_i + d_j + \gamma_j x_i))}{1+\exp(a_j\theta_i + d_j + \gamma_j x_i)}\right)\frac{1}{\sqrt{2\pi}} \exp\left(\frac{-(\theta_i - \beta x_i)^2}{2(\textcolor{black}{1_{\{x_i = 0\}} + \sigma^2 1_{\{x_i = 1\}}})}\right)  d\theta_i\Big\}.
\end{equation*}

Note that the observed Fisher information matrix $I(\Xi)$ cannot be directly obtained from the $mll(\Xi; Y)$ due to the intractable integral. Instead,  we apply the Louis Identity \citep{louis1982finding} to evaluate the observed Fisher information matrix. Let $S(\Xi; Y)$ and $B(\Xi; Y)$ denote the gradient vector and the negative of the hessian matrix of the complete data log-likelihood function, respectively. Then by the Louis Identity,  $I(\Xi)$ can be expressed as
\begin{align*}
I(\Xi)= \mathbf{E}_{\theta}[B(\Xi; Y) \mid Y] - \mathbf{E}_{\theta}[S(\Xi; Y)S(\Xi; Y)^T \mid Y]+\mathbf{E}_{\theta}[S(\Xi; Y) \mid Y]\mathbf{E}_{\theta}[S(\Xi; Y) \mid Y]^T.
\end{align*}
Denote $p_{ij}=\exp\{y_{ij}(a_j\theta_i + d_j + \gamma_j x_i)\}/[1+\exp\{y_{ij}(a_j\theta_i + d_j + \gamma_j x_i)\}]$. Then, in particular, 
\begin{align*}
S(\Xi; Y)= &\frac{\partial l(\Xi; Y)}{\partial \Xi}\\
=&\Big\{\frac{\partial l(\Xi; Y)}{\partial \beta},\textcolor{black}{\frac{\partial l(\Xi; Y)}{\partial \sigma^2}},...,\frac{\partial l(\Xi; Y)}{\partial a_j},...,\frac{\partial l(\Xi; Y)}{\partial d_j},...,\frac{\partial l(\Xi; Y)}{\partial \gamma_j},...\Big\}\\
=&\Big\{\textcolor{black}{\frac{\sum_{i=1}^N x_i(\theta_i-\beta)}{\sigma^2}, \frac{\sum_{i=1}^N x_i(\theta_i-\beta)^2}{2\sigma^4} - \frac{\sum_{i=1}^N x_i}{2\sigma^2}, } ...,\\
&\sum_{i=1}^N\theta_i (y_{ij}-p_{ij}),...,\sum_{i=1}^{N}(p_{ij}-y_{ij}),..., \sum_{i=1}^{N} x_i (y_{ij}-p_{ij})\Big\}.
\end{align*}
Furthermore, note that $B(\Xi; Y) = -\partial^2 l(\Xi; Y)/\partial \Xi\partial\Xi^T$ is a \textcolor{black}{$(3J+2)$ by $(3J +2)$} matrix with the only non-zero entries,
\begin{align*}
&\textcolor{black}{\frac{\partial^2 l(\Xi; Y)}{\partial \beta^2}=-\frac{\sum_{i=1}^{N} x_i}{\sigma^2},}\\
&\textcolor{black}{\frac{\partial^2 l(\Xi; Y)}{\partial (\sigma^2)^2}= -\frac{\sum_{i=1}^N x_i(\theta_i-\beta)^2}{\sigma^6} + \frac{\sum_{i=1}^N x_i}{2\sigma^4},}\\
&\textcolor{black}{\frac{\partial^2 l(\Xi; Y)}{\partial \beta\partial \sigma^2 }=-\frac{\sum_{i=1}^{N} x_i(\theta_i - \beta)}{\sigma^4},}\\
&\frac{\partial^2 l(\Xi; Y)}{\partial a_j^2}=-\sum_{i=1}^{N} \theta_i^2p_{ij}(1-p_{ij}),\\
&\frac{\partial^2 l(\Xi; Y)}{\partial d_j^2}=-\sum_{i=1}^{N} p_{ij}(1-p_{ij}),\\
&\frac{\partial^2 l(\Xi; Y)}{\partial \gamma_j^2}=-\sum_{i=1}^{N} x_i^2 p_{ij}(1-p_{ij}),\\
&\frac{\partial^2 l(\Xi; Y)}{\partial a_j \partial d_j}=\sum_{i=1}^{N} \theta_i p_{ij} (1-p_{ij}),\\
&\frac{\partial^2 l(\Xi; Y)}{\partial a_j \partial \gamma_j}= -\sum_{i=1}^{N} \theta_i x_i p_{ij} (1-p_{ij}),\\
&\frac{\partial^2 l(\Xi; Y)}{\partial d_j \partial \gamma_j}= \sum_{i=1}^{N} x_i p_{ij} (1-p_{ij}).
\end{align*}
In practice, we can use Gaussian quadrature method to approximate the expectation of these terms so as to obtain $\hat I(\tilde \Xi)$. Then $\hat \Sigma_N$ can be evaluated with $\hat \Sigma_N = \hat I^{-1}(\tilde \Xi)$. This then enables Step 1 of Algorithm 1, where Monte Carlo samples of $\Xi^\dagger$ can be simulated from N$(\tilde \Xi, \hat \Sigma_N).$

\clearpage

\bibliography{bibliography}

\end{document}